\documentclass[sigconf]{acmart}
\AtBeginDocument{%
  \providecommand\BibTeX{{%
    \normalfont B\kern-0.5em{\scshape i\kern-0.25em b}\kern-0.8em\TeX}}}

\copyrightyear{2022} 
\acmYear{2022} 
\setcopyright{rightsretained}
\acmConference[e-Energy '22]{The Thirteenth ACM International
Conference on Future Energy Systems}{June 28--July 1, 2022}{Virtual Event, USA}
\acmBooktitle{The Thirteenth ACM International Conference on Future Energy Systems (e-Energy '22), June 28--July 1, 2022, Virtual Event, USA}
\acmPrice{}
\acmDOI{10.1145/3538637.3538853}
\acmISBN{978-1-4503-9397-3/22/06}

\begin{CCSXML}
<ccs2012>
   <concept>
       <concept_id>10010583.10010662.10010668.10010672</concept_id>
       <concept_desc>Hardware~Smart grid</concept_desc>
       <concept_significance>500</concept_significance>
       </concept>
   <concept>
       <concept_id>10010583.10010662.10010668.10010671</concept_id>
       <concept_desc>Hardware~Power networks</concept_desc>
       <concept_significance>500</concept_significance>
       </concept>
   <concept>
       <concept_id>10003752.10003809.10010047.10010048</concept_id>
       <concept_desc>Theory of computation~Online learning algorithms</concept_desc>
       <concept_significance>500</concept_significance>
       </concept>
   <concept>
       <concept_id>10010520.10010553.10003238</concept_id>
       <concept_desc>Computer systems organization~Sensor networks</concept_desc>
       <concept_significance>300</concept_significance>
       </concept>
   <concept>
       <concept_id>10010147.10010178.10010213</concept_id>
       <concept_desc>Computing methodologies~Control methods</concept_desc>
       <concept_significance>300</concept_significance>
       </concept>
 </ccs2012>
\end{CCSXML}

\ccsdesc[500]{Hardware~Smart grid}
\ccsdesc[500]{Hardware~Power networks}
\ccsdesc[500]{Theory of computation~Online learning algorithms}
\ccsdesc[300]{Computer systems organization~Sensor networks}
\ccsdesc[300]{Computing methodologies~Control methods}

\keywords{voltage control, distribution grid, convex body chasing}

\usepackage{algorithm}  \usepackage{amsmath}  \usepackage{amsthm}  
\usepackage{enumitem}  \usepackage{graphicx}  \usepackage{bbm}
\usepackage{caption}
\usepackage{subcaption}
\usepackage{cleveref}  \usepackage{siunitx}  \usepackage{xcolor}  \usepackage{mathtools}  
\newcommand{\N}{\mathbb{N}}  
\newcommand{\R}{\mathbb{R}}  
\newcommand{\X}{\mathcal{X}}  
\newcommand{\Z}{\mathbb{Z}}  
\newcommand{\Sym}{\mathbb{S}}  
\newcommand{\norm}[1]{\left\lVert#1\right\rVert}  
\newcommand{\one}{\mathbf{1}}  
\newcommand{\set}[1]{\left\{#1\right\}}  
\newcommand{\zero}{\mathbf{0}}  

\DeclareMathOperator{\diam}{diam}  
\DeclareMathOperator{\var}{Var}  

\newcommand{\SEL}{\texttt{SEL}}
\newcommand{\CTRL}{\texttt{$\Pi$}}
\newcommand{\im}{\mathbf{i}}
\newcommand{\vpar}{v^\text{par}}
\newcommand{\Vpar}{\mathcal{V}^\text{par}}
\newcommand{\vnom}{v^\text{nom}}

\newcommand{\vmin}{\underline{v}}
\newcommand{\vmax}{\overline{v}}
\newcommand{\qmin}{\underline{q}}
\newcommand{\qmax}{\overline{q}}

\newcommand{\qlims}{[\qmin, \qmax]}
\newcommand{\vlims}{[\vmin, \vmax]}
\newcommand{\etamax}{\overline\eta}

\newtheorem*{theorem*}{Theorem}
\newtheorem{theorem}{Theorem}
\newtheorem{lemma}{Lemma}
\newtheorem{definition}{Definition}
\newtheorem{assumption}{Assumption}
\newtheorem*{assumption*}{Assumption}

\crefname{lemma}{Lemma}{Lemmas}
\crefname{definition}{Definition}{Definitions}
\crefname{assumption}{Assumption}{Assumptions}
\crefname{remark}{Remark}{Remarks}

\begin{document}

\title[Robust Online Voltage Control]{Robust Online Voltage Control with an Unknown Grid Topology}

\author{Christopher Yeh}
\affiliation{%
  \institution{California Institute of Technology}
  \city{Pasadena}
  \state{CA}
  \country{USA}
}
\email{cyeh@caltech.edu}

\author{Jing Yu}
\affiliation{%
  \institution{California Institute of Technology}
  \city{Pasadena}
  \state{CA}
  \country{USA}
}
\email{jing@caltech.edu}

\author{Yuanyuan Shi}
\affiliation{%
  \institution{UC San Diego}
  \city{San Diego}
  \state{CA}
  \country{USA}
}
\email{yyshi@eng.ucsd.edu}

\author{Adam Wierman}
\affiliation{%
  \institution{California Institute of Technology}
  \city{Pasadena}
  \state{CA}
  \country{USA}
}
\email{adamw@caltech.edu}

\begin{abstract}
Voltage control generally requires accurate information about the grid's topology in order to guarantee network stability. However, accurate topology identification is a challenging problem for existing methods, especially as the grid is subject to increasingly frequent reconfiguration due to the adoption of renewable energy. Further, running existing control mechanisms with incorrect network information may lead to unstable control. In this work, we combine a nested convex body chasing algorithm with a robust predictive controller to achieve provably finite-time convergence to safe voltage limits in the online setting where the network topology is initially unknown. Specifically, the online controller does not know the true network topology and line parameters, but instead must learn them over time by narrowing down the set of network topologies and line parameters that are consistent with its observations and adjusting reactive power generation accordingly to keep voltages within desired safety limits. We demonstrate the effectiveness of the approach using a case study, which shows that in practical settings the controller is indeed able to narrow the set of consistent topologies quickly enough to make control decisions that ensure stability.
\end{abstract}

\maketitle

\section{Introduction}
Operators of electricity distribution grids must maintain voltages at each bus within certain operating limits, as deviations from such limits may damage electrical equipment and cause power outages \cite{regulator,haes2019survey}. This ``voltage control'' or ``voltage regulation'' problem has been well-studied in the literature, \textit{e.g.}, ~\cite{rebours2007survey,zhu2015fast,molzahn2017survey} and the references therein. Voltage control algorithms aim to guarantee grid stability while minimizing the costs associated with control inputs. Typically, the algorithms assume \textit{exact knowledge} of the underlying grid topology.

However, the exact grid topology and line parameters are often not known, and inexact knowledge may lead to stability problems for voltage control algorithms~\cite{park2018exact,li2019robust}. This problem is exacerbated by the increasing integration of distributed energy resources (DERs), such as photovoltaic and storage devices.  Especially in distribution grids, where DERs often do not belong to the electricity utility, the grid operator may lack up-to-date information about the grid topology~\cite{liao2015distribution,deka2017structure}. While a grid operator can install sensors to help identify the current network topology, unless such sensors are densely deployed, uncertainty about the topology remains; and so cost is prohibitive. Furthermore, parts of the grid may undergo frequent reconfiguration either due to load balancing~\cite{baran1989network} or unplanned maintenance. Thus, distribution grid operators cannot expect to operate with perfect topology information and the design of voltage control robust to unknown grid topology is crucial.

To date, only a limited number of voltage control mechanisms have been studied in the case when the grid topology is uncertain. One common design is to learn a voltage controller via deep reinforcement learning (DRL), \textit{e.g.},~\cite{gao2021consensus,sun2021two,xu2020optimal,duan2019deep,wang2020data} and references within. However, such approaches have neither performance nor voltage stability guarantees. Thus, they are not suitable for safety-critical infrastructure. Two recent works~\cite{shi2021stability,cui2021decentralized} propose a model-free DRL approach for voltage control with stability guarantees. The main tool being used in~\cite{shi2021stability,cui2021decentralized} is Lyapunov stability theory, from which a structural constraint for stable controllers is derived, and policy optimization with the constraint is performed. In contrast, our framework jointly learns the system model (consistent to data) and stable controller, in an online fashion.

Another common approach for voltage control when grid topology is uncertain, is to use a two-stage model-based approach: first, estimate the network topology, a.k.a., system identification, using structured neural networks; and second, apply an existing model predictive control with the identified model, \textit{e.g.}, ~\cite{chen2020data}. However, accurate topology identification is a challenging problem~\cite{sharon2012topology} and existing methods require voltage measurements over hundreds of time steps~\cite{liao2015distribution,deka2016estimating}, after which uncertainty still remains. This is problematic because, during the time where system identification is being performed, the network is not able to respond effectively to disturbances, since using an incorrect model may lead to unstable control. Given the timescale of grid topology changes in practice, a different approach is needed.

\subsection{Contributions} We propose a new approach for voltage control over an uncertain grid topology that does not perform system identification and voltage control separately.  Instead, our approach robustly learns to stabilize voltage within the desired limits directly, \emph{without any prior knowledge of the topology and without needing to precisely learn the topology}.   

Our approach takes ideas from online nested convex body chasing (CBC) \cite{argue2019nearly} and robust  predictive control and combines them using a new learning framework~\cite{ho2021online} to apply them to voltage control for the first time. Intuitively, we use a nested CBC algorithm in order to track the set of topologies that are consistent with the observed voltage measurements---as more measurements are taken the set of consistent topologies shrinks (and so the sets are nested).  As these measurements are taken, a form of robust predictive control is used for voltage control, where the robustness guarantee is used to ensure the uncertainty about the topology can be handled.  Our main result (\Cref{thm:main}) provides a finite error stability bound for the overall controller, which is summarized in \Cref{alg:robust_online_volt_control}. This represents the first voltage control algorithm that is provably robust to uncertainty about network topology. 

In addition to providing theoretical guarantees, we demonstrate the effectiveness of our proposed approach using a case study of a 56-bus distribution grid from the Southern California Edison (SCE) utility~\cite{farivar2012optimal}. In this setting, we give the controller no prior information about the topology of the grid, yet the controller quickly narrows down the set of topologies and line parameters that are consistent with its observations and adjusts reactive power generation to keep voltages within desired safety limits when faced with disturbance. In fact, our controller's performance nearly matches that of controllers which assume perfect knowledge of the topology, even when given only partial observations of bus voltages.

\subsection{Related Work}
The problem of voltage control has a long history with many important contributions~\cite{senjyu2008optimal,gao2010review,zhang2014optimal,li2014realtime, zhu2015fast,bolognani2013distributed,tang2019fast,qu2020optimal,chen2020data,wang2020data} (and the references within). Classic voltage regulation devices such as tap-changing transformers~\cite{senjyu2008optimal,gao2010review} are effective in dealing with
\emph{slow} voltage variations. However, with fast time variations introduced by renewables, a
growing body of literature has focused on inverter-based controllers that can quickly respond by adjusting their active and reactive power set-points~\cite{li2014realtime, zhu2015fast,bolognani2013distributed,tang2019fast,qu2020optimal}. Most of these works cast voltage control as an optimization problem and then propose different centralized or decentralized algorithms depending on the communication infrastructure. Critically, all of these voltage control methods assume that the underlying grid topology is \emph{known}. 

Some recent works consider voltage control with unknown network topology and parameters. These works either use a two-stage model-based approach of first performing system identification and then optimizing over the identified model~\cite{chen2020data}, or an entirely model-free to learn a controller via deep reinforcement learning~\cite{gao2021consensus,sun2021two,xu2020optimal,duan2019deep,wang2020data} which has no performance or voltage stability guarantees. In contrast, our work considers model-based approach that jointly learns the system model and a controller.

An important tool for voltage control is model predictive control (MPC), which has been investigated in a number of works, \emph{e.g.}, \cite{guo2019mpc,chen2020data,maharjan2021robust}.  Of particular relevance to this paper is work on robust MPC algorithms for voltage control such as~\cite{maharjan2021robust}. While many proposals have emerged, to this point provable bounds for robust MPC algorithms have typically been elusive.  A key part of our proposed framework is the development of a robust control algorithm for voltage control with a provable robustness guarantee.  This is summarized in \Cref{thm:oracle}. 

The standard approach for handling uncertainty about network topology is to estimate the topology using a form of system identification. There is a growing literature of such approaches, e.g., \cite{kekatos2014grid,liao2015distribution,deka2016estimating,deka2017structure,park2018exact,li2019robust}. A prominent approach is to use graphical models for topology reconstruction~\cite{deka2016estimating}, via maximum likelihood methods while enforcing other structural restrictions like low-rank and sparsity. Our approach in this paper is novel because system identification is not performed separately from control.  Instead of seeking to estimate the topology, the controller itself is learned directly without the intermediate step of system identification.
\section{Model}
\label{sec:volt_ctrl_background}

We study voltage control on an unknown grid topology.  We consider a radial power distribution circuit represented as a network $G = (\mathcal{N}, \mathcal{E})$, where $\mathcal{N}$ is the set of buses (nodes) and $\mathcal{E} \subset \mathcal{N} \times \mathcal{N}$ is the set of lines (edges). The buses are numbered $\mathcal{N} = \{0, 1, 2, \dotsc, n\}$, where bus 0 is the substation and other buses are branch buses. Let $\mathcal{C} \subseteq \mathcal{N}$ denote the subset of buses with controllable reactive power injection. Because the network is radial (tree-structured) and rooted at the substation (bus 0), there is a unique path $\mathcal{P}_i$ from bus 0 to any other bus $i$.

For each line $(i,j) \in \mathcal{E}$, its complex impedance is $r_{ij} + \im x_{ij}$, with real-valued line parameters $r_{ij}, x_{ij} > 0$ (units \si{\ohm}). Define the following matrices $R^\star, X^\star \in \R^{n \times n}$, which are computed from the network topology and line parameters. Assuming that $G$ is a connected graph (\textit{i.e.}, no bus is disconnected from the substation), then $R^\star, X^\star$ are positive definite and have strictly positive entries \cite{farivar2013equilibrium}.
\begin{gather}
    R^\star_{ij} := 2 \sum_{\mathclap{(h,k) \in \mathcal{P}_i \cap \mathcal{P}_j}} r_{hk},
    \quad
    X^\star_{ij} := 2 \sum_{\mathclap{(h,k) \in \mathcal{P}_i \cap \mathcal{P}_j}} x_{hk},
    \quad
    i,j \in \{1, \dotsc, n\}
\end{gather}
Let $v \in \R^n$ denote the squared voltage magnitude at the buses, excluding the substation. Let $p + \mathbf{i} q$ denote the complex power injection at the buses, where $p \in \R^n$ (units \si{\watt}) is the net active power injection, and $q \in \R^n$ (units \si{\var}) is the net reactive power injection. We assume that the active power injection is exogenous, but that reactive power at each bus can be decomposed as $q = q^c + q^e$, where $q^c$ is the ``controllable'' component and $q^e$ is the ``exogenous'' (\emph{i.e.}, uncontrollable) component.

Under the linearized Simplified DistFlow model~\cite{li2014realtime},
\begin{equation}\label{eq:simplified_distflow}
    v = R^\star p + X^\star q + v^0 \one_n = X^\star q^c + \vpar,
\end{equation}
where $\vpar = R^\star p + X^\star q^e + v^0 \one_n \in \R^n$ (``par'' stands for ``partial'') represents the effect of the exogenous quantities on voltage, and $v^0$ is a known, fixed constant representing the squared voltage magnitude at the substation.

We can model this as a discrete-time linear system
\begin{equation}\label{eq:vol_dyn1}
    v(t+1) = X^\star q^c(t) + \vpar(t).
\end{equation}
Substituting $u(t) = q^c(t) - q^c(t-1)$ (change in controllable reactive power injection) and $w(t) = \vpar(t) - \vpar(t-1)$ (change in exogenous noise) yields the linear dynamical system
\begin{align}\label{eq:vol_dyn2}
    v(t+1)
        &= v(t) + X^\star u(t) + w(t).
        \end{align}

The voltage control problem~\cite{farivar2012optimal} is to drive the squared voltage magnitudes of each bus from an initial state $v(1) \in \R^n$ into a given multi-dimensional interval $\vlims \subset \R^n$; it is possible that $v(1)$ does not start within the interval due to some large initial disturbance. For all $t \geq 2$, the voltage control algorithm aims to maintain $v(t)$ within $\vlims$, ideally keeping $v(t)$ as close as possible to a ``nominal'' value $\vnom \in \vlims$, typically $\vnom = (\vmin + \vmax)/2$. The cost for deviating from $\vnom$ is measured by $   \norm{v(t) - \vnom}_{P_v}^2$ for some positive semidefinite matrix $P_v$, where $\norm{x}_A^2 := x^\top A x$.

At each time step, buses may change their reactive power injection in order to regulate the voltage close to $\vnom$. The reactive power injection must remain within a given bound $q^c(t) \in \qlims \subset \R^n$, and we assume $q^c(0)$ indeed starts within $\qlims$. Because buses not in $\mathcal{C}$ do not have any ability to control the reactive power injection: $\forall i \not\in \mathcal{C}.\ \qmin_i = \qmax_i = 0$. In our model, we do not place any hard ``ramp constraints'' on $u(t)$. However, we impose a quadratic ramping cost $\norm{u(t)}_{P_u}^2$ where $P_u$ is a positive semidefinite matrix.

To drive voltage back to the desired interval, and minimize the aforementioned voltage violation cost $\norm{v(t) - \vnom}_{P_v}^2$ and control cost $\norm{u(t)}_{P_u}^2$, one needs the exact system dynamics \eqref{eq:vol_dyn1} for choosing the optimal reactive power injections $q^c(1), q^c(2), \dotsc$. However, in distribution systems, the exact network parameters are often unknown or hard to estimate. 

In this paper, we work with the voltage control problem on an \emph{unknown} grid topology. We assume that the true $X^\star$ lies within a known compact set $\X \subset \Sym_+^n \cap \R^{n \times n}_+$ and that we only have access to the real-time voltage measurement $v(t)$ at each bus. We perform voltage control while learning the system model at the same time. ($\Sym_+^n$ is the set of $n \times n$ positive semidefinite matrices, and $\R^{n \times n}_+$ is the set of $n \times n$ matrices with nonnegative entries.)

\section{Robust Online Voltage Control}

\begin{figure}[t]
    \centering
    \includegraphics[width=\columnwidth]{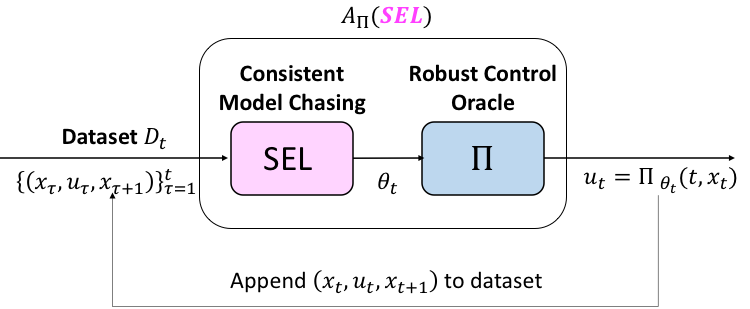}
    \caption{Online robust control framework}
    \label{fig:orc_framework}
\end{figure}

In this section we introduce our robust online voltage control algorithm and its performance bound (\Cref{thm:main}), which is the main result of this paper.

\subsection{Algorithm}

The structure of the algorithm is summarized in \Cref{fig:orc_framework} and detailed in \Cref{alg:robust_online_volt_control}. As the figure shows, the algorithm consists of two main components, a consistent model chasing algorithm \SEL{} and a robust control oracle \CTRL, which are then combined by adapting ideas from \cite{ho2021online}. These two crucial components are detailed in steps (2) and (3) in \Cref{alg:robust_online_volt_control}, respectively. 

The model chasing algorithm \SEL{} performs nested CBC, which is the online problem of choosing a sequence of points within sequentially nested convex sets, with the aim of minimizing the sum of distances between the chosen points \cite{argue2019nearly}. In our setting, the nested convex sets are the consistent sets of possible model parameters, described in \Cref{sec:proof}. We use a simple projection-based algorithm that is more computationally efficient than the Steiner point-based approaches, which have state-of-the-art competitive ratio for nested CBC but are not computationally tractable for high-dimensional settings like voltage control.

The robust control oracle \CTRL{} we use is novel and is developed specifically to provide a provable robustness guarantee (\Cref{thm:oracle}). This robustness guarantee is necessary for the analysis which integrates \SEL{} with \CTRL{} to provide the finite mistake guarantee of the overall algorithm. See \Cref{sec:proof} for details. Note that other choices for both of these components are possible, as long as they provide the guarantees needed in the analysis in \Cref{sec:proof}.

Intuitively, \SEL{} and \CTRL{} are combined in a way such that \CTRL{} outputs an action that causes a voltage limit violation, \SEL{} always reduces the uncertainty about the model by a minimum amount. \SEL{} ensures that our current model estimate $\hat{X}_t$ is consistent with observed data collected so far. This model improvement means that \CTRL{} cannot take too many ``bad'' actions before the system uncertainty is small.

One important detail in \Cref{alg:robust_online_volt_control} is the inclusion of the slack variable $\xi$. If no slack variable is included (equivalently, with $\xi=0$ fixed), the optimization problem in \CTRL{} is guaranteed (\Cref{thm:oracle}) to be feasible and keep the voltage within limits only when the current model estimate $\hat{X}_t$ is close enough to the true model. However, such a guarantee does not necessarily hold when $\xi$ is allowed to be nonzero. On the other hand, since the current model estimate $\hat{X}_t$ may in general be far from the true model, the optimization problem \CTRL{} without a slack variable may not be feasible; that is, there may not be a control action that keeps the predicted voltage within limits under an incorrect model.

Thus, to be precise, for our finite mistake guarantee (\Cref{thm:main}) to hold, the optimization problem for the robust control oracle \CTRL{} should first be solved without the slack variable. This ensures that if $\hat{X}_t$ is sufficiently close enough to the true model, then the algorithm will not make a mistake. In the case that \CTRL{} is infeasible, then it should be solved again with a slack variable, which ensures feasibility. However, solving \CTRL{} twice is unnecessary in practice, and so we have written \Cref{alg:robust_online_volt_control} to reflect its practical implementation.

\begin{algorithm}[t!]
\caption{Online Robust Voltage Controller}
\label{alg:robust_online_volt_control}
Inputs 
\begin{itemize}
    \item fixed squared voltage magnitude at substation: $v^0 \in \R^n$
    \item desired nominal squared voltage magnitude: $\vnom \in \R^n$
    \item limits on the squared voltage magnitude: $\vlims \subset \R^n$
    \item limits on the reactive power injection: $\qlims \subset \R^n$
    \item initial state: $v(1), q^c(0) \in \R^n$
    \item state and action cost matrices: $P_v, P_u \in \Sym^n_+$
    \item compact convex uncertainty set for the model parameter: $\X \subset \Sym_+^n \cap \R^{n \times n}_+$
    \item compact convex uncertainty set for exogenous voltage quantities: $\Vpar \subset \R^n$
    \item upper bound for noise: $\eta > 0$
    \item weight for slack variable: $\beta > 0$
\end{itemize}
Procedure
\begin{enumerate}[leftmargin=*]
    \item Initialize an empty trajectory $D_0 = [\,]$. Set $t = 1$.
    \item If $t = 1$, initialize estimate of model parameters $\hat{X}_1 \in \mathcal{X}$.

    Otherwise, query the model chasing algorithm for a new parameter estimate: $\hat{X}_t \leftarrow \SEL[D_{t-1}]$.
    \begin{subequations}\label{eq:cmc}
    \begin{align}\hspace{-1mm}
        \SEL[D_t]:\ \min_{\hat{X}_t \in \Sym^n} \
        & \| \hat{X}_t - \hat{X}_{t-1} \|_\triangle^2
            \label{eq:cmc_cost} \\
        \text{s.t.} \
        & \hat{X}_t \in \X
            \label{eq:cmc_constr_X} \\
        & \forall (v_i, v_{i+1}, u_i, q^c_i) \in D_t: \nonumber \\
        &\quad -\eta \one \preceq v_{i+1} - v_i - \hat{X}_t u_i \preceq \eta \one
            \label{eq:cmc_constr_consistent} \\
        &\quad v_{i+1} - \hat{X}_t q_i^c \in \Vpar \label{eq:cmc_constr_vpar}
    \end{align}
    \end{subequations}

    \item Query the robust control oracle for the next control action: $u(t) \leftarrow \CTRL_{\hat{X}_t}(v(t))$.
    \begin{subequations}\label{eq:oracle_slack}
    \begin{align}
       \CTRL_{\hat{X}_t}:\ \min_{u \in \R^n,\, \xi \in \R_+}\
       & \norm{\hat{v}' - \vnom}_{P_v}^2 + \norm{u}_{P_u}^2 + \beta \xi ^2
            \label{eq:oracle_slack_cost} \\
        \text{s.t.}\
        & \qmin \preceq q^c(t-1) + u \preceq \qmax
            \label{eq:oracle_slack_constr_u} \\
        & \hat{v}' = v(t) + \hat{X}_t u
            \label{eq:oracle_slack_constr_dynamics} \\
        & k = \eta + \rho(\epsilon) \norm{u}_2
            \label{eq:oracle_slack_constr_buffer} \\
        & \vmin + (k-\xi) \one \preceq \hat{v}' \preceq \vmax - (k-\xi) \one
            \label{eq:oracle_slack_constr_robust}
    \end{align}        
    \end{subequations}

    \item Update $q^c(t) \leftarrow q^c(t-1) + u(t)$. Apply the control action $u(t)$. Observe the system transition to $v(t+1) = v(t) + X^\star u(t) + w(t)$.
    \item Append $(v(t), v(t+1), u(t),q^c(t))$ to the trajectory $D_{t-1}$,
    \[
        D_t = \left[ (v(i), v(i+1), u(i), q^c(i)) \right]_{i=1}^t.
    \]
    \item Increment $t \leftarrow t+1$. Repeat from Step (2).
\end{enumerate}
\end{algorithm}

\subsection{Assumptions}
Before presenting the main results, we introduce several key assumptions that underlie our analysis and discuss why they are both needed and practical. 

The first assumption, stated below, is standard and bounds the noise in the dynamics.

\begin{assumption}
\label{as:bounded_w}
The change in noise is bounded as
\begin{equation}
    \forall t: \norm{w(t)}_\infty \leq \eta,
\label{eq:bounded_w}
\end{equation}
where $w(t) = \vpar(t) - \vpar(t-1)$, $\eta \in [0, \etamax]$ is a known constant and $\etamax = \min_{i=1,\dotsc,n} (\vmax_i - \vmin_i)/2$.
\end{assumption}

Physically, this bound represents an assumption that the active and exogenous reactive power injection does not vary dramatically between time steps, as can be seen by expanding the definition of $w(t)$:
\begin{align*}
    w(t)
    &= \vpar(t) - \vpar(t-1) \\
    &= R^\star(p(t) - p(t-1)) + X^\star(q^e(t) - q^e(t-1)).
\end{align*}
For example, if the net active and exogenous reactive power injection is the same at time steps $t$ and $t-1$, then $w(t) = 0$.

The requirement that $\eta \leq \etamax$ ensures that the space between $\vmin + \eta\one$ and $\vmax - \eta\one$ is nonempty so that, for any $v(t)$ inside this space, the addition of noise $w(t)$ does not push the voltage outside of the target voltage limits. On the other hand, if this space were empty, then for any $v(t)$ and $u(t)$, there would exist some $w(t)$ satisfying $\norm{w(t)}_\infty \leq \eta$ such that $v(t+1)$ exceeds the desired voltage limits. Because we seek a voltage controller that is robust to change in exogenous noise, we must assume that the space is nonempty.

Our second assumption provides a bound on the uncertainty about the network topology, such as the maximum connectivity and impedances.

\begin{assumption}
\label{as:bounded_X}
The true model $X^\star$ lies within a known compact, convex set $\mathcal{X} \subset \Sym_+^n \cap \R_+^{n \times n}$. Let $\diam(\X)$ denote the diameter of $\mathcal{X}$:
\[
    \diam(\X)
    = \sup_{X_1, X_2 \in \mathcal{X}} \norm{X_1 - X_2}_\triangle.
\]
\end{assumption}

\Cref{as:bounded_X} ensures that the unknown true model parameters $X^\star$ belong to a compact, convex set $\X$, which is a minimal assumption necessary for proving an analytic guarantee. This compact set forms the starting point of our consistent model chasing algorithm \SEL.

We equip the space $\Sym^n$ of $n \times n$ symmetric matrices with a norm $\norm{\cdot}_\triangle$ defined as
\[
    \norm{A}_\triangle
    = \norm{\text{upper-triangle}(A)}_2
    = \sqrt{\sum_{i=1}^n \sum_{j=i}^n A_{ij}^2}.
\]
We use this choice of norm in order to isometrically map the space of symmetric matrices to Euclidean space, thereby enabling us to take advantage of known results on nested convex body chasing within Euclidean space.

Finally, our third assumption is about the existence of feasible control actions for the robust control oracle. This assumption can be interpreted as either a bound on the noise, or a requirement that the controllable reactive power injection be flexible enough to satisfy the demand of any noise.

\begin{assumption}
\label{as:bounded_vpar}
There exists a compact, convex set $\Vpar \subset \R^n$ such that $\forall t \geq 0:\ \vpar(t) \in \Vpar$. Furthermore, for some known $\epsilon > 0$,
\begin{align*}
& \forall \vpar \in \Vpar,\, X \in \mathcal{X}. \\
& \exists q^c \in \qlims \text{ s.t. } X q^c + \vpar \in [\vmin + (\eta+\epsilon)\one,\, \vmax - (\eta+\epsilon)\one].
\end{align*}

\end{assumption}

Intuitively, the $\eta$ padding is required for robustness to the noise $w(t)$, while the $\epsilon$ padding is required for robustness to model uncertainty (\emph{i.e.}, uncertainty about $X^\star$).

\subsection{Main result}

We can now state our main result, which is a finite-error bound for \Cref{alg:robust_online_volt_control}.

\begin{theorem}[Main Result]
\label{thm:main}
Under \Cref{as:bounded_w,as:bounded_X,as:bounded_vpar}, \Cref{alg:robust_online_volt_control} ensures that the voltage limits will be violated at most
\begin{equation}\label{eq:error_bound}
   \frac{4 \gamma}{\rho(\epsilon)} \diam(\X) + 1 
\end{equation}
times, where $\rho(\epsilon) = \frac{\epsilon}{2 \norm{\qmax - \qmin}_2}$, $\gamma = \pi (m - 1) m^{m/2}$, and $m = n(n+1)/2$.
\end{theorem}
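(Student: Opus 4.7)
The plan is to run a charging argument: each voltage violation is charged to a quantitative amount of progress made by the consistent-model chasing iterates of $\SEL$, and then the total progress is capped by the nested CBC competitive ratio.

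First, I would use the oracle's robustness guarantee (\Cref{thm:oracle}) to translate a voltage violation into a quantitative statement about the distance from $\hat{X}_t$ to $X^\star$. Concretely, the buffer $k = \eta + \rho(\epsilon)\norm{u}_2$ in \eqref{eq:oracle_slack_constr_buffer} is engineered so that if the one-step prediction error $\norm{(X^\star - \hat{X}_t) u(t)}_\infty$ stays below $\rho(\epsilon) \norm{u(t)}_2$ (and $\norm{w(t)}_\infty \leq \eta$), then $v(t+1) \in \vlims$ and no violation can occur. Thus, whenever a violation is observed at time $t+1$, one of the half-spaces of the new consistency constraint \eqref{eq:cmc_constr_consistent} is violated by $\hat{X}_t$, with a quantifiable margin normalized by $\norm{u(t)}_2$.

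Next, I would turn this margin into a lower bound on the $\SEL$ increment $\norm{\hat{X}_{t+1} - \hat{X}_t}_\triangle$. Since $\SEL[D_{t+1}]$ is the Euclidean projection (in $\norm{\cdot}_\triangle$) of $\hat{X}_t$ onto the updated consistent set, and since the normal to the violated half-space in \eqref{eq:cmc_constr_consistent} has $\norm{\cdot}_\triangle$-norm at most $\norm{u(t)}_2$, the violation margin $\rho(\epsilon)\norm{u(t)}_2$ divided by that normal norm yields a movement of at least $\rho(\epsilon)/2$ (with a further factor of $2$ absorbed into the constant, giving the $4$ in \eqref{eq:error_bound}). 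This is the step where the factor of $4$ and the form of $\rho(\epsilon) = \epsilon / (2 \norm{\qmax - \qmin}_2)$ emerge from careful bookkeeping of the triangle inequality combining model mismatch and noise $\eta$.

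Then I would invoke the nested CBC competitive ratio. Because $\Sym^n$ with $\norm{\cdot}_\triangle$ is isometric to $\R^m$ with $m = n(n+1)/2$, and because the consistent sets constructed from \eqref{eq:cmc_constr_X}--\eqref{eq:cmc_constr_vpar} are nested (each new datum only refines the constraints) and contain $X^\star$ by construction, the projection-based $\SEL$ enjoys the competitive-ratio bound $\sum_t \norm{\hat{X}_{t+1} - \hat{X}_t}_\triangle \leq \gamma \diam(\X)$ with $\gamma = \pi(m-1)m^{m/2}$, using $X^\star$ itself as the offline benchmark point. Dividing the total path length by the per-violation lower bound of $\rho(\epsilon)/4$ gives the main term $\frac{4\gamma}{\rho(\epsilon)} \diam(\X)$; the extra $+1$ accounts for the initial time step $t=1$, at which $\hat{X}_1$ is an arbitrary point of $\X$ unconstrained by any trajectory data and may incur a single ``free'' violation before $\SEL$ becomes informative. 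The main obstacle I anticipate is the middle step: rigorously converting an $\ell_\infty$ violation of $v(t+1) - v(t) - \hat{X}_t u(t)$ into a Euclidean separation of $\hat{X}_t$ from the updated consistent body in the $\norm{\cdot}_\triangle$ geometry, with all the $\norm{u(t)}_2$ factors and the dual relationship between the half-space normal and the control vector cancelling cleanly to yield the constants in \eqref{eq:error_bound}.
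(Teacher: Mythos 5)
Your proposal is correct and follows essentially the same route as the paper: the paper proves exactly your three ingredients --- the consistent sets $P_t$ are nested, convex, and contain $X^\star$; the greedy-projection \SEL{} is $\gamma$-competitive with $\gamma=\pi(m-1)m^{m/2}$ via the isometry of $(\Sym^n,\norm{\cdot}_\triangle)$ with $\R^m$ and \cite[Theorem 1.3]{argue2019nearly}; and \CTRL{} is $\rho(\epsilon)$-robust with zero mistakes (\Cref{thm:oracle}) --- and then obtains the mistake bound by invoking \cite[Theorem 2.5]{ho2021online}. Your charging argument (a violation forces the new consistency half-space, whose normal has $\norm{\cdot}_\triangle$-norm $\norm{u(t)}_2$, to separate $\hat{X}_t$ from the updated consistent set by margin $\rho(\epsilon)\norm{u(t)}_2$, hence \SEL{} must move at least $\rho(\epsilon)$ per violation while total movement is capped by $\gamma\diam(\X)$) is precisely the content of that cited combination theorem, so you have simply inlined the step the paper black-boxes.
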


To the best of our knowledge, this result is the first provable stability bound for voltage control in a setting where the network topology is unknown. It highlights that the \Cref{alg:robust_online_volt_control} can ensure stability even after \emph{unknown} changes to the network topology, \textit{e.g.}, due to maintenance, failures, etc., without the need to perform system identification.

Intuitively, this result guarantees that the model chasing algorithm \SEL{} will learn a ``good enough'' model for control quickly. When the robust controller \CTRL{} makes a mistake, the model chasing algorithm will learn from that mistake and significantly reduce the set of consistent models. Because the initial set of consistent models is bounded, and this set shrinks a significant amount after each mistake, the total number of mistakes is bounded.

To interpret the error bound~\eqref{eq:error_bound} in \Cref{thm:main}, we notice that it is proportional to the diameter $\diam(\X)$ of the parameter space $\X$ and the competitive ratio $\gamma$ of the consistent model chasing algorithm, and inverse proportional to the oracle robustness margin $\rho$. Note that the dependence on $m$ of the consistent model chasing competitive ratio $\gamma$ is very conservative. Because of computational tractability concerns, our implementation of \SEL{} uses a projection-based algorithm rather than the state-of-the-art Steiner point method~\cite{bubeck2020chasing,argue2019nearly}. For the algorithm in~\cite{bubeck2020chasing}, $\gamma_\text{Steiner} = m/2$. As our case studies show, in practice the projection-based algorithm used in \SEL{} performs much better than the worst-case bound.

We outline a proof of \Cref{thm:main} in the next section, but before doing so we want to highlight one piece of that proof that is of independent interest.  In particular, a major step in the proof is to provide a feasibility guarantee for the robust control oracle component \CTRL{} of the algorithm, which is done in \Cref{thm:oracle}.

\section{Proofs}
\label{sec:proof}

We now prove our main result \Cref{thm:main}. Our proof builds on and adapts the approach of~\cite{ho2021online}, which outlines a general framework for integrating model chasing and robust control.  To explain the general framework, we first consider a discrete-time nonlinear dynamical system
\[
    x_{t+1} = f_*(x_t, u_t) + w_t,
    \qquad
    x_0 \text{ given},
    \qquad
    (f_*, \mathbf{w}) \in \mathcal{F},
\]

\noindent where $x \in \mathcal{S} \subseteq \R^n$ is the system state and $u \in \mathcal{U} \subseteq \R^m$ is the control input. The unknown function $f_*$ and disturbance sequence $\mathbf{w} \in \ell^\infty(\Z_+; \R^n)$ belong to an uncertainty set $\mathcal{F}$, and the disturbance is bounded as $\norm{\mathbf{w}}_\infty \leq \eta$. Assume that $\mathcal{F}$ has a \emph{compact parametrization} $(\mathbb{T}, \mathsf{K}, d)$, where $\mathbb{T}: \mathsf{K} \to \wp(\mathcal{F})$ is a mapping from the parameter space $\mathsf{K}$ to a set of functions and disturbances such that $\mathcal{F} \subseteq \bigcup_{\theta \in \mathsf{K}} \mathbb{T}[\theta]$. $\wp(\mathcal{F})$ denotes the powerset of $\mathcal{F}$. Let $d$ denote a metric on $\mathsf{K}$, so $(\mathsf{K}, d)$ is a compact metric space.

The control objective is specified as a sequence of indicator ``goal" functions $\mathcal{G} = (\mathcal{G}_0, \mathcal{G}_1, \dotsc)$. Each $\mathcal{G}_t: \mathcal{X} \times \mathcal{U} \to \{0,1\}$ encodes a desired condition per time step $t$:
\[
    \mathcal{G}_t(x_t, u_t) = \one[\text{$x_t$, $u_t$ violate desired condition at time $t$}].
\]

The main result of \cite{ho2021online} specifies a set of sufficient conditions for a finite-mistake guarantee. These conditions decouple online robust control into separate online learning and robust control components. 

The online learning component requires a consistent model chasing algorithm \SEL, which takes as input the current observed trajectory $D_t = [(x_i, x_{i+1}, u_i)]_{i=1}^t$ and outputs an estimated parameter $\theta_t \in \mathsf{K}$. The estimated parameter $\theta_t$ must be \emph{consistent} with $D_t$. \begin{definition}[Consistent Parameter]
\label{def:consistent}
We say $\theta \in \mathsf{K}$ is consistent with $D_t$ if there exists 
$(f,\mathbf{w}) \in \mathbb{T}[\theta]$ such that
\[
    \forall (x_t, x_{t+1}, u_t) \in D_t:\ x_{t+1} = f(x_t,u_t; \theta) + w_t.
\]
\end{definition}
Let $P_t$ denote the set of all parameters consistent with $D_t$. We say \SEL{} is $\gamma$-competitive if $\sum_{t=1}^{\infty} d(\theta_t, \theta_{t-1}) \leq \gamma \max_{\theta \in \mathsf{K}} d(P_{\infty}, \theta)$ holds for a fixed constant $\gamma>0$, which we call the \emph{competitive ratio}.

The robust control component requires a control oracle \CTRL, which given the current state $x_t$ and a parameter $\theta_t$, outputs a control action $u_t = \CTRL_{\theta_t}(x_t)$ that is robust for all systems that are close to $\theta_t$. In particular, we call a control oracle \emph{$\rho$-robust} for control objective $\mathcal{G}$, if all trajectories in $S^\Pi[\rho; \theta]$ achieve $\mathcal{G}$ after finitely many mistakes, and $S^\Pi[\rho; \theta]$ is defined as the set of all possible trajectories generated by $\CTRL_{\hat\theta}$ for all $\hat\theta$ such that $d(\theta, \hat\theta) \leq \rho$.
\begin{equation}     S^\Pi[\rho; \theta] = \left\{
           \begin{aligned}
            & D_\infty = [(x_t, x_{t+1}, u_t)]_{t=1}^\infty: \\
            &\quad u_t = \Pi_{\hat{\theta}}(x_t) \\
            &\quad x_{t+1} = f(x_t, u_t) + w_t
        \end{aligned}
        \ \middle| \
        \begin{aligned}
        & (f, \mathbf{w}) \in \mathbb{T}[\theta], \\
        & d(\hat{\theta}, \theta) \leq \rho
        \end{aligned}
    \right\}
\end{equation}
Due to the page limit, we refer readers to~\cite{ho2021online} for a more detailed discussion of consistent model chasing algorithms and $\rho$-robust control oracles. As a summary, if \SEL{} chases consistent models and \CTRL \  is a robust oracle for $\mathcal{G}$, then the resulting $A_\CTRL(\SEL)$ algorithm achieves a finite mistake guarantee, which is stated in the following.

\begin{theorem} \label{thm:ho}
\cite[Theorem 2.5]{ho2021online} Assume that \SEL{} chases consistent models and \CTRL{} is a robust oracle for objective $\mathcal{G}$. Then for any starting point $x_0$ and trajectory $[(x_t, u_t)]_{t=0}^\infty$ generated by $\mathcal{A}_\CTRL(\SEL)$ (illustrated in \Cref{fig:orc_framework}), the following mistake guarantees hold: (i) If \CTRL{} is robust, then $\sum_{t=0}^\infty \mathcal{G}_t(x_t, u_t) < \infty$; (ii) If \CTRL{} is uniformly $\rho$-robust and \SEL{} is $\gamma$-competitive, then $$\sum_{t=0}^\infty \mathcal{G}_t(x_t, u_t) < \max\left\{1, M_{\rho}^\CTRL\right\}\left(\frac{2\gamma}{\rho} \diam(\mathsf{K}) +1\right)$$ where $M_{\rho}^\CTRL$ denotes the worst case total mistakes of the $\rho$-robust control oracle \CTRL.
\end{theorem}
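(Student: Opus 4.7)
The plan is to prove the two parts of \Cref{thm:ho} by decoupling the mistake count into two pieces: a per-epoch mistake bound delivered by the $\rho$-robustness of \CTRL{}, and an epoch count delivered by the $\gamma$-competitiveness of \SEL{}. The preliminary observation for both parts is that the true parameter $\theta_*$ (the one with $(f_*, \mathbf{w}) \in \mathbb{T}[\theta_*]$) is trivially consistent with every transition in $D_t$, so $\theta_* \in P_t$ for every $t$ and hence $\theta_* \in P_\infty$. Combined with $\gamma$-competitiveness this yields the cumulative \SEL{} path-length bound $\sum_{t \geq 1} d(\hat\theta_t, \hat\theta_{t-1}) \leq \gamma \max_{\theta \in \mathsf{K}} d(P_\infty, \theta) \leq \gamma \diam(\mathsf{K})$.

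For part (ii), I would partition the time axis into epochs $E_1, E_2, \ldots$ by declaring a new epoch each time the cumulative \SEL{} path length since the last epoch boundary first reaches $\rho/2$. The path-length bound then limits the number of epochs to at most $\frac{2 \gamma}{\rho} \diam(\mathsf{K}) + 1$, with the trailing $+1$ absorbing the possibly incomplete final epoch. Within any single epoch every estimate $\hat\theta_t$ lies within $\rho/2$ of the epoch's anchor and thus within $\rho$ of every other estimate in the epoch; I would then invoke an appropriate extension of uniform $\rho$-robustness to cap the mistakes in that epoch at $M_\rho^\CTRL$. Multiplying gives at most $M_\rho^\CTRL \bigl( \frac{2\gamma}{\rho} \diam(\mathsf{K}) + 1 \bigr)$ mistakes in total, with the $\max\{1, M_\rho^\CTRL\}$ factor in the statement covering the degenerate case $M_\rho^\CTRL = 0$, where a single unavoidable mistake per epoch is still accommodated by the strict inequality.

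Part (i) only assumes (non-uniform) robustness of \CTRL{}, and its argument is qualitative: $\gamma$-competitiveness together with compactness of $(\mathsf{K}, d)$ forces $(\hat\theta_t)$ to be a Cauchy sequence, hence $\hat\theta_t \to \theta_\infty \in P_\infty$. For all $t$ large enough, $\hat\theta_t$ lies in an arbitrarily small neighborhood of $\theta_\infty$, and the robustness guarantee of \CTRL{} then ensures that only finitely many subsequent mistakes occur; the transient portion before convergence contributes at most finitely many additional mistakes since only finitely many ``estimate regimes'' can be traversed before the Cauchy tail condenses into any prescribed ball.

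The principal obstacle is the per-epoch mistake bound in part (ii). As literally stated in the excerpt, $S^\Pi[\rho; \theta]$ and hence $M_\rho^\CTRL$ govern trajectories generated by a \emph{single fixed} estimate $\hat\theta$, whereas \Cref{alg:robust_online_volt_control} refreshes $\hat\theta_t$ at every step. The key technical step is therefore to lift uniform $\rho$-robustness to time-varying estimate sequences that remain pairwise within $\rho$ of one another; the natural way to do this, and what I expect the Ho et al.\ framework to supply, is an adversarial reformulation in which at each step an adversary is permitted to select an arbitrary $\hat\theta$ inside a $\rho$-ball around the reference parameter and $M_\rho^\CTRL$ is shown to bound the worst-case mistake count under that adversary. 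With that extension in hand, the epoch partitioning and the path-length budget combine directly to yield the stated mistake bound.
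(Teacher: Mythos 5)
The first thing to note is that the paper does not prove \Cref{thm:ho} at all: it is imported verbatim as Theorem 2.5 of \cite{ho2021online}, so there is no in-paper argument to compare your proposal against. Judged on its own terms, your reconstruction has the right architecture --- the path-length budget $\sum_{t}d(\hat\theta_t,\hat\theta_{t-1})\le\gamma\,\diam(\mathsf{K})$ obtained from competitiveness plus $\theta_*\in P_\infty$, an epoch decomposition triggered by accumulated \SEL{} movement, and a per-epoch mistake cap of $M_{\rho}^\CTRL$ --- and this is indeed the shape of the argument in \cite{ho2021online}.

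The gap is the one you flag yourself, and ``pairwise within $\rho$'' is not the right way to close it. Membership in $S^\Pi[\rho;\theta]$ requires two things of the reference $\theta$: the realized transitions must be generated by some $(f,\mathbf{w})\in\mathbb{T}[\theta]$, \emph{and} every estimate the controller actually used must lie within $\rho$ of $\theta$. Your epoch anchor (the first estimate of the epoch) need not satisfy the first requirement --- nothing forces the true dynamics observed \emph{during} the epoch to be reproducible by $\mathbb{T}$ of that anchor --- so having all estimates within $\rho$ of one another does not place the epoch's trajectory in any $S^\Pi[\rho;\theta]$. The observation that closes the argument is that \SEL{} outputs \emph{consistent} parameters: the estimate produced at the end of the epoch is, by \Cref{def:consistent}, a parameter $\theta$ for which some $(f,\mathbf{w})\in\mathbb{T}[\theta]$ reproduces the entire observed prefix including the epoch's own transitions, while the path-length control within the epoch places all estimates used during the epoch within $\rho$ of it. That exhibits the epoch's trajectory (started from the epoch's initial state, which is why robustness must be uniform over starting points) as an element of $S^\Pi[\rho;\theta]$ and yields the $M_{\rho}^\CTRL$ cap. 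Separately, your part~(i) argument invokes $\gamma$-competitiveness to conclude $(\hat\theta_t)$ is Cauchy, but part~(i) does not assume competitiveness, only consistency; as restated in this paper, consistency alone does not force convergence of the estimates (they could oscillate inside $P_\infty$ forever). In \cite{ho2021online} the definition of consistent model chasing additionally requires convergence of $\hat\theta_t$ to a point of $P_\infty$, which is the property part~(i) actually leans on; your tail argument is then fine once one notes the limit lies in every $P_s$ by closedness.
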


To apply \Cref{thm:ho} to prove \Cref{thm:main} we need to prove that (i) the proposed algorithm \Cref{eq:cmc} chases consistent models and has a bounded competitive ratio, and (ii) the proposed robust algorithm in \Cref{eq:oracle} is a $\rho$-robust control oracle, for bounded disturbance in the system topology. 
In particular, the correspondence of the definitions are as follows. We have $\theta = X$, and 
\begin{align*}
    \mathsf{K} &= \X, \quad v(1), q^c(0) \text{ given} \\
    d(X, X') &= \|X - X'\|_\triangle \\
    \mathbb{T}[X] &= \set{
        (f, \mathbf{w}) \mid
        f(v, u) = v + X u,\
        \norm{\mathbf{w}}_\infty \leq \eta
    } \\
    \mathcal{F} &= \bigcup_{X \in \X} T[X] \\
    \mathcal{G}_t &= \one[v(t) \in \vlims].
\end{align*}

We begin by proving that the set defined from \eqref{eq:cmc_constr_X}, \eqref{eq:cmc_constr_consistent}, \eqref{eq:cmc_constr_vpar} in \Cref{alg:robust_online_volt_control} is consistent with the trajectory $D_T$.

\begin{lemma}[$\SEL{}$ is consistent]
Suppose $D_T$ is a trajectory of voltage measurements and control actions taken up to time $T$:
\[
    D_T = [v(t), v(t+1), u(t), q^c(t)]_{t=1}^T.
\]
The set 
\begin{equation}
\label{eq:consistent-set}
    P_T := \set{
        \hat{X} \in \X
        \ \middle| \
        \begin{aligned}
        & \forall\left[v(t), v({t+1}), u(t), q^c(t)\right] \in D_T: \\
        &\quad \norm{v({t+1}) - v(t) - \hat{X} u(t)}_\infty \leq \eta, \\
        &\quad v({t+1}) - \hat{X} q^c(t) \in \Vpar
        \end{aligned}
    }
\end{equation}
is a consistent set for $D_T$, i.e., $\hat{X}\in\mathcal{X}$ is consistent (\Cref{def:consistent}) if and only if $\hat{X}\in P_T$.

\end{lemma}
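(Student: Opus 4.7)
The plan is to show $P_T = \{\hat{X} \in \X : \hat{X} \text{ is consistent with } D_T\}$ by proving both inclusions. The three constraints defining $P_T$ --- membership in $\X$~\eqref{eq:cmc_constr_X}, the step-wise noise bound~\eqref{eq:cmc_constr_consistent}, and the absolute exogenous-voltage bound~\eqref{eq:cmc_constr_vpar} --- correspond precisely to the three physical priors in \Cref{as:bounded_X,as:bounded_w,as:bounded_vpar}. The proof reduces to matching each constraint with its physical counterpart via the two equivalent forms of the linearized DistFlow model, $v(t+1) = v(t) + X u(t) + w(t)$ and $v(t+1) = X q^c(t) + \vpar(t)$.

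For the forward direction, fix $\hat{X} \in P_T$. Let $\hat{f}(v,u) := v + \hat{X} u$ and define $\hat{w}_i := v(i+1) - v(i) - \hat{X} u(i)$ for each $(v(i), v(i+1), u(i), q^c(i)) \in D_T$. Constraint~\eqref{eq:cmc_constr_consistent} gives $\norm{\hat{w}_i}_\infty \leq \eta$, so $(\hat{f}, \hat{\mathbf{w}}) \in \mathbb{T}[\hat{X}]$; by construction $v(i+1) = \hat{f}(v(i), u(i)) + \hat{w}_i$, satisfying \Cref{def:consistent}. Constraint~\eqref{eq:cmc_constr_vpar} additionally certifies that the induced latent trajectory $\hat{v}^{\text{par}}(i) := v(i+1) - \hat{X} q^c(i)$ lies in $\Vpar$, and a direct substitution shows $\hat{v}^{\text{par}}(i) - \hat{v}^{\text{par}}(i-1) = \hat{w}_i$, so this latent trajectory is internally self-consistent with the chosen disturbance sequence.

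For the reverse direction, suppose $\hat{X} \in \X$ is consistent, so there exist $w_i$ with $\norm{w_i}_\infty \leq \eta$ satisfying $v(i+1) = v(i) + \hat{X} u(i) + w_i$ for each tuple in $D_T$. Rearranging and taking infinity norms immediately yields~\eqref{eq:cmc_constr_consistent}. For~\eqref{eq:cmc_constr_vpar}, the exogenous-voltage trajectory that the candidate model $\hat{X}$ induces on the observed data is $\hat{v}^{\text{par}}(i) := v(i+1) - \hat{X} q^c(i)$, which must lie in $\Vpar$ by the physical prior of \Cref{as:bounded_vpar}. Constraint~\eqref{eq:cmc_constr_X} is just the hypothesis $\hat{X} \in \X$.

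The only delicate point is that \Cref{def:consistent}, read literally, enforces only the increment-form dynamics with bounded disturbance and does not mention $\Vpar$. To make the lemma an exact equivalence one must interpret the voltage-control instantiation of $\mathbb{T}[\hat{X}]$ as also embedding the prior from \Cref{as:bounded_vpar}, so that~\eqref{eq:cmc_constr_vpar} is the encoding of that prior rather than an extra strengthening of consistency. Once this interpretation is fixed, the proof needs no nontrivial estimates and is entirely algebraic.
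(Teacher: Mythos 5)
Your proof is correct and follows essentially the same route as the paper's: both directions proceed by defining $\hat{w}(t) := v(t+1) - v(t) - \hat{X}u(t)$ and $\vpar(t) := v(t+1) - \hat{X}q^c(t)$ and matching the three constraints defining $P_T$ directly to \Cref{as:bounded_w,as:bounded_X,as:bounded_vpar}. Your closing caveat is well taken --- the paper's stated instantiation of $\mathbb{T}[X]$ likewise omits the $\Vpar$ prior, so the paper's own proof of the reverse direction also implicitly reads \Cref{as:bounded_vpar} into the notion of consistency, exactly the interpretive fix you propose.
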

\begin{proof}
Consider any $\hat{X} \in P_T$. For $t \in \{1, \dotsc, T\}$, define $$\hat{w}(t) := v({t+1}) - v(t) - \hat{X} u(t).$$  Then, $\hat{w}(t)$ clearly satisfies \Cref{as:bounded_w}. Moreover, let $$\vpar(t) := v({t+1}) - \hat{X} q^c(t).$$ Then $\vpar(t)$ is the corresponding admissible noise that satisfies \Cref{as:bounded_vpar}. 

Conversely, if $\hat{X} \not\in P_T$, then either $\norm{v({t+1}) - v(t) - \hat{X} u(t)}_\infty > \eta$ for some $t$ (and therefore $\norm{\hat{w}(t)}_\infty > \eta$) or $\vpar(t)$ violates \Cref{as:bounded_vpar}. Thus, $P_T$ contains exactly all $X \in \mathcal{X}$ that are consistent with $D_T$.
\end{proof}

Observe that $P_T$ is a closed, bounded, and convex set. Furthermore, it is non-empty, since $X^\star \in P_T$. 
Intuitively, $P_T$ is the smallest set containing all parameters that could generate the observed trajectory $\{v(t)\}_{t=1}^{T+1}$ along with a corresponding admissible sequence of noise compatible with \Cref{as:bounded_w,as:bounded_X,as:bounded_vpar}.

Now that we have defined the consistent sets $P_t$, we can express \SEL{} equivalently as solving $\min_{\hat{X}_t \in \Sym^n} \norm{\hat{X}_t - \hat{X}_{t-1}}_\triangle^2$ s.t. $\hat{X}_t \in P_t$. This is a nested convex body chasing problem, where we aim to minimize the movement distance $\norm{\hat{X}_t - \hat{X}_{t-1}}_\triangle$ between nested convex sets $P_t \subseteq P_{t-1}$. By leveraging known results about nested convex body chasing algorithms \cite{argue2019nearly,bubeck2020chasing}, we can prove that the model chasing algorithm \SEL{} described by \eqref{eq:cmc} has a bounded competitive ratio. This is formalized in the following lemma.
\begin{lemma}[\SEL{} is competitive]
\label{thm:ncbc}
For any compact convex space $\mathsf{K} \subset \Sym^n$, the greedy projection algorithm for consistent model chasing (CMC) in \Cref{eq:cmc} achieves a competitive ratio
\[
\gamma = \pi (m-1) m^{m/2}
\]
where $m = n(n+1)/2$.
\end{lemma}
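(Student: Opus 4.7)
The plan is to reduce the problem to a standard nested convex body chasing instance in Euclidean space and then invoke an off-the-shelf competitive-ratio bound for the greedy projection algorithm.

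First, I would observe that the map $\phi:(\Sym^n, \|\cdot\|_\triangle) \to (\R^m, \|\cdot\|_2)$ that sends a symmetric matrix to the vector of its upper-triangular entries is a linear isometry of Hilbert spaces, by the definition of $\|\cdot\|_\triangle$. Under $\phi$, the set $\mathsf{K} = \X$ maps to a compact convex set $\phi(\X) \subset \R^m$ whose diameter equals $\diam(\X)$, and each consistent set $P_t$ defined in \eqref{eq:consistent-set}, being the intersection of $\X$ with finitely many affine inequalities coming from \eqref{eq:cmc_constr_consistent} and \eqref{eq:cmc_constr_vpar}, maps to a closed convex subset of $\phi(\X)$.

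Next, I would verify the nesting $P_{t} \subseteq P_{t-1}$ for every $t \geq 1$. This is immediate because $D_{t-1}$ is a prefix of $D_t$, so any $\hat X$ that satisfies all constraints indexed by $D_t$ satisfies those indexed by $D_{t-1}$ as well. The optimization problem \eqref{eq:cmc} is then exactly the greedy projection update $\hat X_t = \arg\min_{X \in P_t} \|X - \hat X_{t-1}\|_\triangle^2$, which under $\phi$ becomes ordinary Euclidean projection onto the nested convex body $\phi(P_t) \subseteq \R^m$. Therefore the update performed by \SEL{} coincides, in the Euclidean picture, with the standard greedy projection algorithm for nested convex body chasing on a sequence of nested convex bodies contained in a compact convex set of diameter $\diam(\X)$ in $\R^m$.

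I would then invoke the known competitive-ratio guarantee for greedy projection on nested convex bodies in $\R^m$, which gives the bound $\gamma = \pi(m-1)m^{m/2}$ (see~\cite{argue2019nearly,bubeck2020chasing}). Since $\phi$ is an isometry, movement cost $\sum_t \|\hat X_t - \hat X_{t-1}\|_\triangle$ and diameter are preserved under $\phi^{-1}$, so the same competitive ratio is inherited by \SEL{} on $(\Sym^n, \|\cdot\|_\triangle)$. Substituting $m = n(n+1)/2$ completes the proof.

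The main obstacle is really just pinning down the right form of the NCBC competitive-ratio bound for greedy projection: the state-of-the-art $m/2$ bound of~\cite{bubeck2020chasing} is for the Steiner point algorithm, not greedy projection, so care is needed to cite the specific analysis that yields the constant $\pi(m-1)m^{m/2}$ for the projection-based algorithm used here. Everything else reduces to the (essentially definitional) isometry argument and the observation that the consistent sets are nested.
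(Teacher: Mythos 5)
Your proposal is correct and follows essentially the same route as the paper: map $(\Sym^n,\norm{\cdot}_\triangle)$ isometrically to $(\R^m,\norm{\cdot}_2)$ with $m=n(n+1)/2$, observe that \eqref{eq:cmc} is greedy Euclidean projection onto the nested consistent sets, and invoke \cite[Theorem 1.3]{argue2019nearly} for the greedy projection algorithm (the paper additionally notes $\omega_d/\omega_{d-1}\le\pi$ to put the bound in the stated form). Your caution about not confusing the Steiner-point $m/2$ bound with the greedy-projection bound is exactly the right citation to pin down, and the paper resolves it the way you anticipate.
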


\begin{proof} The normed vector space $(\Sym^n, \norm{\cdot}_\triangle)$ is isometrically isomorphic to the Euclidean space $(\R^m, \norm{\cdot}_2)$ with $m = n(n+1)/2$. The mapping between the two spaces is the vectorization of the upper-triangle of the symmetric matrix:
\[
    A \in \Sym^n \longleftrightarrow [A_{1,1:n}, A_{2,2:n}, \dotsc, A_{n,n}]^\top \in \R^m.
\]
Note that $\forall d \in \N:\ (\omega_d / \omega_{d-1}) \leq \pi$, where $\omega_d$ denotes the surface area of the $d$-sphere.
Then, by \cite[Theorem 1.3]{argue2019nearly}, the greedy projection algorithm achieves a competitive ratio of at most $\pi (m-1)m^{m/2}$.
\end{proof}

Finally, we show that our controller \CTRL{} is $\rho$-robust. In particular, we prove that $\CTRL_{\hat{X}}$ makes no mistakes ($M_\rho^\Pi = 0$) given a consistent model $\hat{X} \in P_t$.

\begin{theorem}[\CTRL{} is $\rho$-robust]
\label{thm:oracle}
Under \Cref{as:bounded_w,as:bounded_X,as:bounded_vpar}, suppose $\hat{X} \in P_t$, where $P_t$ is given in \eqref{eq:consistent-set} for $t\geq 1$. Define $\rho(\epsilon) = \epsilon/\left(2\|\qmax - \qmin\|_2\right)$. 
Then, the following optimization problem is feasible,
\begin{subequations}
\label{eq:oracle}
\begin{align}
    \min_{u \in \R^n}\quad & \norm{\hat{v}' - \vnom}_{P_v}^2 + \norm{u}_{P_u}^2 \label{eq:oracle_cost} \\
    \text{s.t.}\quad
    & \qmin \preceq q^c(t-1) + u \preceq \qmax
        \label{eq:oracle_constr_u} \\
    & \hat{v}' = v(t) + \hat{X}u
        \label{eq:oracle_constr_dynamics} \\
    & k = \eta + \rho(\epsilon) \norm{u}_2
        \label{eq:oracle_constr_buffer} \\
    & \vmin + k \one \preceq \hat{v}' \preceq \vmax - k \one.
        \label{eq:oracle_constr_robust}
\end{align}
\end{subequations}
Further, the solution of \eqref{eq:oracle}, $u(t)$, guarantees voltage stability for all $X \in \X$ such that $\norm{X-\hat{X}}_\triangle \leq \rho(\epsilon)$. That is, $v(t) + X u(t) +w(t)\in \vlims$ for all $w(t)$ satisfying \Cref{as:bounded_w}.
\end{theorem}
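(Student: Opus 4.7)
The plan is to prove the two claims—feasibility of \eqref{eq:oracle} and voltage stability under the model mismatch $\|X-\hat{X}\|_\triangle \le \rho(\epsilon)$—separately. The unifying idea is that the $k$-padding in constraint \eqref{eq:oracle_constr_robust} is precisely tuned to absorb both the disturbance $w(t)$ and the worst-case effect of an incorrect model, with the factor of $2$ in the definition of $\rho(\epsilon)$ splitting the padding budget evenly between feasibility slack and robustness slack.

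For feasibility, I would exhibit an explicit feasible action by invoking \Cref{as:bounded_vpar}. Since $\hat{X}\in P_t$, the $\Vpar$-constraint \eqref{eq:cmc_constr_vpar} applied to the most recent trajectory tuple yields $\vpar := v(t)-\hat{X}q^c(t-1)\in\Vpar$. Applying \Cref{as:bounded_vpar} to this $\vpar$ and to $\hat{X}\in\X$ produces some $q^c\in\qlims$ with $\hat{X}q^c+\vpar \in [\vmin+(\eta+\epsilon)\one,\, \vmax-(\eta+\epsilon)\one]$. Taking $u := q^c-q^c(t-1)$ gives $\hat{v}' = v(t)+\hat{X}u = \hat{X}q^c+\vpar$, so \eqref{eq:oracle_constr_u}–\eqref{eq:oracle_constr_dynamics} hold immediately. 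For \eqref{eq:oracle_constr_robust}, the crucial observation is $\|u\|_2 \le \|\qmax-\qmin\|_2$, which by the definition of $\rho(\epsilon)$ gives $\rho(\epsilon)\|u\|_2 \le \epsilon/2$ and hence $k \le \eta+\epsilon/2 \le \eta+\epsilon$; the interval $[\vmin+k\one,\,\vmax-k\one]$ therefore contains $\hat{v}'$.

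For the stability claim, fix any optimizer $u(t)$ of \eqref{eq:oracle}, any $X\in\X$ with $\|X-\hat{X}\|_\triangle \le \rho(\epsilon)$, and any noise $w(t)$ satisfying \Cref{as:bounded_w}. I would decompose
\[
    v(t+1) - \hat{v}' = (X-\hat{X})\,u(t) + w(t),
\]
bound each coordinate of $(X-\hat{X})u(t)$ via Cauchy–Schwarz by the product of $\|u(t)\|_2$ with a row-norm of $X-\hat{X}$, and then relate that row-norm back to $\|X-\hat{X}\|_\triangle$ using the isometric isomorphism between $(\Sym^n,\|\cdot\|_\triangle)$ and Euclidean space noted in the proof of \Cref{thm:ncbc}. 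Combined with $\|w(t)\|_\infty \le \eta$, this yields $\|v(t+1)-\hat{v}'\|_\infty \le \eta+\rho(\epsilon)\|u(t)\|_2 = k$, and \eqref{eq:oracle_constr_robust} immediately gives $v(t+1)\in\vlims$.

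The main obstacle is the norm-comparison step in the robustness argument: the consistent model chasing analysis and the definition of $\rho$ both use $\|\cdot\|_\triangle$, while bounding $\|(X-\hat{X})u\|_\infty$ naturally calls for a row-norm or operator-norm, and any residual constant from comparing the upper-triangle to the full Frobenius norm must be absorbed into the padding so that the two halves of the proof meet cleanly at $k=\eta+\rho(\epsilon)\|u(t)\|_2$. Once this bookkeeping is handled, the remainder consists of straightforward algebraic verifications of the inclusions implied by \Cref{as:bounded_w,as:bounded_X,as:bounded_vpar}.
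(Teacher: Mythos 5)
Your proposal is correct and follows essentially the same route as the paper: feasibility is established by applying \Cref{as:bounded_vpar} to $\vpar = v(t) - \hat{X} q^c(t-1) \in \Vpar$ together with the bound $\rho(\epsilon)\norm{u}_2 \le \epsilon/2$, and stability follows by absorbing $(X - \hat{X})u(t) + w(t)$ into the $k$-padding of constraint \eqref{eq:oracle_constr_robust}. The one step you flag as the main obstacle---controlling $\norm{(X-\hat{X})u}_\infty$ by $\rho(\epsilon)\norm{u}_2$---is handled in the paper by \Cref{lemma:tri_norm}, which uses the symmetry of $X - \hat{X}$ to show that every row's Euclidean norm is bounded by $\norm{X-\hat{X}}_\triangle$ with no extra factor, so no residual constant needs to be absorbed into the padding and the two halves of your argument meet exactly at $k = \eta + \rho(\epsilon)\norm{u(t)}_2$.
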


Observe that \Cref{eq:oracle} corresponds to \Cref{eq:oracle_slack} in \Cref{alg:robust_online_volt_control} with the slack variable set to zero. 
We note that the robustness margin $\rho$ decreases as $\qlims$ increase. The intuitive reason is that the voltage is more sensitive to changes in $\hat{X}$ when the range of possible $u$'s expands. Therefore, a fixed voltage buffer of $\epsilon$ in constraints \eqref{eq:oracle_slack_constr_buffer} and \eqref{eq:oracle_constr_robust} affords less robustness to changes in $\hat{X}$ as $\qlims$ gets larger.

\begin{proof}[Proof of \Cref{thm:oracle}]
First, we prove that the optimization problem \eqref{eq:oracle} is feasible. Define a new variable $q^c := q^c(t-1) + u$. Let
\begin{align}
    \vpar(t-1)
    &:= v(t) - \hat{X} q^c(t-1)
    \\
    \hat{v}'(u)
    &:= v(t) + \hat{X} u
    = v(t) + \hat{X}[q^c - q^c(t-1)] \\
        &= \hat{X} q^c + \vpar(t-1)
    \\
    {v}'(u)
    &:= v(t) + X u
    = \hat{v}'(u) + (X - \hat{X}) u. \label{eq:v_hat_prime}
\end{align}
Here, $\vpar$ is the conjectured admissible noise when we assume the underlying parameter is $\hat{X}$. Recall \eqref{eq:vol_dyn2} and we can interpret $\hat{v}'$ as the one-step prediction of voltage under the selected consistent model $\hat{X}$ given a control action $u$ and the current voltage $v(t)$, without the disturbance term. Similarly, $v'$ is the disturbance-free one-step prediction of voltage under a different model $X$ such that $\norm{X-\hat{X}}_\triangle \leq \rho(\epsilon)$ given the same control action $u$ and the current voltage $v(t)$.

Since $\hat{X}$ is consistent with $P_{t-1}$, $\vpar(t-1) \in \Vpar$.
Therefore, by \Cref{as:bounded_vpar}, there exists $q^c \in \qlims$ such that
\begin{equation}
    \label{eq:as3_bound_in_thm}
    \vmin + (\eta+\epsilon)\one
    \preceq \hat{v}'(u)
    \preceq \vmax - (\eta+\epsilon)\one.
\end{equation}
Then the $u$ corresponding to such $q^c$, \emph{i.e.}, $u = q^c - q^c(t-1)$, clearly satisfies constraint \eqref{eq:oracle_constr_u}. By \Cref{lemma:tri_norm}, for all ${X}$ satisfying $\|{X} - \hat{X}\|_\triangle \leq \rho(\epsilon)$,
\begin{equation}\label{eq:xhat_x}
    -\rho(\epsilon) \norm{u}_2 \one \preceq ({X} - \hat{X}) u \preceq \rho(\epsilon) \norm{u}_2 \one.
\end{equation}
Adding \eqref{eq:xhat_x} with \eqref{eq:as3_bound_in_thm} and using \eqref{eq:v_hat_prime} yields
\begin{equation}
\label{eq:vv}
    \vmin + (\eta + \epsilon - \rho(\epsilon) \norm{u}_2) \one
    \preceq v'(u)
    \preceq \vmax - (\eta + \epsilon - \rho(\epsilon) \norm{u}_2) \one.
\end{equation}
By choosing
\[
    \rho(\epsilon) = \frac{\epsilon}{2 \norm{\qmax - \qmin}_2},
\]
and observing that $u$ must satisfy $\norm{u}_2 \leq \norm{\qmax - \qmin}_2$ since $u = q^c - q^c(t-1)$, we have
\[
    \rho(\epsilon) \norm{u}_2
    \leq \frac{\epsilon}{2}
    \leq \epsilon - \rho(\epsilon) \norm{u}_2.
\]
Therefore, we can use this relation to upper and lower bound \eqref{eq:vv} and arrive at
\begin{equation}
\label{eq:vv-final}
    \vmin + (\eta + \rho(\epsilon) \norm{u}_2) \one
    \preceq {v}'(u)
    \preceq \vmax - (\eta + \rho(\epsilon) \norm{u}_2) \one
\end{equation}
This holds for any one-step noiseless prediction $v'(u)$ from $X$ such that $\norm{X-\hat{X}}_\triangle \leq \rho(\epsilon)$. Since $\hat{X}$ trivially lies in this set, $\hat{v}'$ also satisfies \eqref{eq:vv-final}, which in turn shows that $u$ satisfies constraint \eqref{eq:oracle_constr_robust}. Thus, the optimization problem is feasible.

Next, we show that every solution $u$ from \eqref{eq:oracle} generated with $\hat{X} $ guarantees that $v(t) + X u \in \vlims$ for all $X$ such that $\norm{X-\hat{X}}_\triangle \leq \rho(\epsilon)$. Subtracting \eqref{eq:xhat_x} from constraint \eqref{eq:oracle_constr_robust} and using \Cref{eq:v_hat_prime} yields
\[
    \vmin + \eta \one
    \preceq v'(u)
    \preceq \vmax - \eta \one.
\]
By \Cref{as:bounded_w}, any admissible disturbance $w(t)$ is bounded as $-\eta \one \preceq w(t) \preceq \eta \one$. This means that 
\[
    \vmin
    \preceq v(t+1) = v'(u) + w(t)
    \preceq \vmax,
\]
which shows the control action computed using \eqref{eq:oracle} guarantees voltage stability for all $X \in \mathcal{X}$ such that $\norm{X-\hat{X}}_\triangle \leq \rho(\epsilon)$.
\end{proof}

\begin{lemma}\label{lemma:tri_norm}
For all $A \in \Sym^n$, $b \in \R^n$, and $\alpha \in \R_+$,
\[
    \norm{A}_\triangle \leq \alpha
    \quad\implies\quad
    -\alpha \norm{b}_2 \one \preceq A b \preceq \alpha \norm{b}_2 \one.
\]
\end{lemma}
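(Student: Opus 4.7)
The plan is to reduce the vector inequality to an entrywise bound: it suffices to show that $|(Ab)_i| \leq \alpha \norm{b}_2$ for every $i \in \{1,\dotsc,n\}$. First I would apply Cauchy--Schwarz to the $i$-th entry,
\[
    |(Ab)_i| \;=\; \Bigl|\sum_{j=1}^n A_{ij} b_j\Bigr| \;\leq\; \sqrt{\sum_{j=1}^n A_{ij}^2}\,\cdot\,\norm{b}_2,
\]
which reduces the problem to bounding the $\ell_2$-norm of each row of $A$ by $\alpha$.

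Next I would use symmetry of $A$ to bound the row norm by $\norm{A}_\triangle$. The row-$i$ sum splits as $\sum_{j=1}^n A_{ij}^2 = \sum_{j=1}^{i-1} A_{ij}^2 + \sum_{j=i}^{n} A_{ij}^2$. The second sum consists of entries $(i,j)$ with $j \geq i$, which lie in the upper triangle by definition. For the first sum, symmetry gives $A_{ij} = A_{ji}$, and the entries $(j,i)$ with $j < i$ also lie in the upper triangle. Hence every term in $\sum_{j=1}^n A_{ij}^2$ corresponds to a distinct upper-triangular entry of $A$, so
\[
    \sum_{j=1}^n A_{ij}^2 \;\leq\; \sum_{i'=1}^n \sum_{j'=i'}^n A_{i'j'}^2 \;=\; \norm{A}_\triangle^2 \;\leq\; \alpha^2.
\]

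Combining these two steps gives $|(Ab)_i| \leq \alpha \norm{b}_2$ for each $i$, which is exactly the componentwise statement $-\alpha\norm{b}_2 \one \preceq A b \preceq \alpha\norm{b}_2 \one$. There is no real obstacle here; the only subtlety worth stating carefully is the bookkeeping argument that the squared entries of row $i$ are all accounted for by upper-triangular entries once symmetry is invoked, so that we do not accidentally double-count or miss the diagonal term $A_{ii}^2$ (which appears exactly once in both sums).
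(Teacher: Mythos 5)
Your proposal is correct and follows essentially the same route as the paper's own proof: Cauchy--Schwarz on each entry $(Ab)_i$, followed by the symmetry argument that folds the sub-diagonal entries of row $i$ into distinct upper-triangular entries so that $\norm{A_i}_2 \leq \norm{A}_\triangle \leq \alpha$. The bookkeeping point you flag about not double-counting the diagonal term is handled identically in the paper via the split $\sum_{k=1}^{i-1} A_{k,i}^2 + \sum_{j=i}^{n} A_{i,j}^2$.
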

\begin{proof}
Let $A_i$ denote the $i^\text{th}$ row of $A$. By symmetry of $A$,
\begin{align*}
    \norm{A_i}_2^2
    &= \sum_{j=1}^n A_{i,j}^2
    = \sum_{k=1}^{i-1} A_{k,i}^2 + \sum_{j=i}^n A_{i,j}^2 \\
    &\leq \sum_{k=1}^n \sum_{j=k}^n A_{k,j}^2
    = \norm{A}_\triangle^2
    \leq \alpha^2,
\end{align*}
so $\norm{A_i}_2 \leq \alpha$. Then
\[
    -\alpha \norm{b}_2
    \leq -\norm{A_i}_2 \norm{b}_2
    \leq (Ab)_i
    \leq \norm{A_i}_2 \norm{b}_2
    \leq \alpha \norm{b}_2.
\]
This holds for all $i \in \{1, \dotsc, n\}$, which yields the desired result.
\end{proof}

Finally, combining \Cref{thm:oracle} with \Cref{thm:ncbc} and applying \Cref{thm:ho} completes the proof of \Cref{thm:main}.
\section{Case Study}
\label{sec:experiments}
We demonstrate the effectiveness of \Cref{alg:robust_online_volt_control} using a case study based on a single-phase 56-bus network from the Southern California Edison (SCE) utility (\Cref{fig:sce56bus}). We use a setup that mimics what has been used previously in the voltage control literature.  In particular, the detailed line parameters $r_{ij}$ and $x_{ij}$ that we use for this network are the same as those in Table 1 of~\cite{farivar2012optimal}. In our experiments, we use the linear power model in \Cref{eq:simplified_distflow} to solve for voltages; we do not use non-linear power flow models.

\begin{figure}[tbp]
    \centering
    \includegraphics[width=\columnwidth]{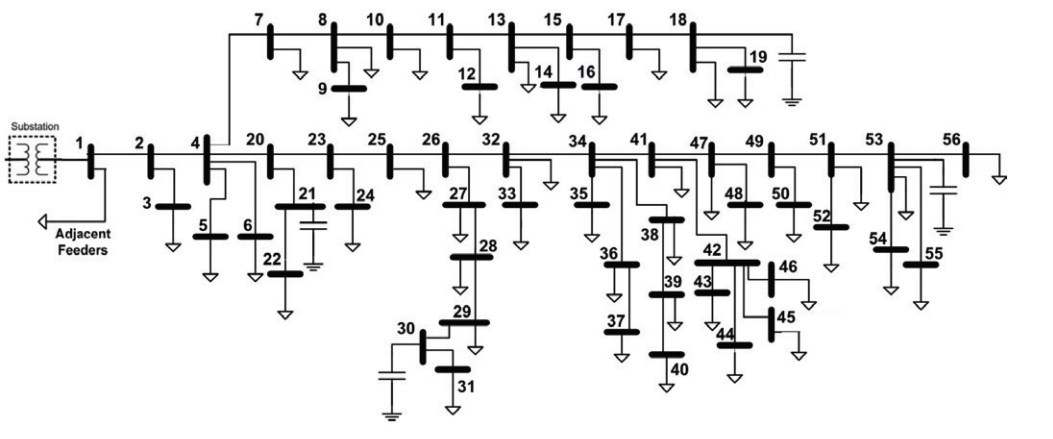}
    \caption{Schematic diagram of SCE 56 bus distribution system.}
    \label{fig:sce56bus}
\end{figure}

\subsection{Experimental Setup}
We use real-world data~\cite{qu2020optimal} collected from the unmodified network, with power injection at buses in $\mathcal{C} = \{2, 4, 7, 8, 9, 10, 11, 12, 13, 14, 15,\\ 16, 19, 20, 23, 25, 26, 32\}$ adjusted by scaled actual generation from a nearby photovoltaic system. Exogenous active and reactive power injection measurements are taken at each bus at 6-second intervals over a 24-hour period. We assume that controllers with reactive power injection capacity are set up at every node. \Cref{fig:baseline_noctrl} plots these values for several buses to illustrate the setting considered.

The network parameters used in our experiments are as follows:
\begin{itemize}
    \item nominal squared voltage magnitude at the substation $v^0 = \vnom = (12\si{kV})^2$
    \item squared voltage magnitude limits $\vlims = [0.95 \text{pu}, 1.05 \text{pu}] = [(11.4 \si{kV})^2, (12.6 \si{kV})^2]$
    \item reactive power injection limits $\qlims = [-0.24 \text{MVar}, 0.24 \text{MVar}]$
    \item state and input cost matrices $P_v = 0.1 I$, $P_u = 10 I$
    \item initial state $v(1) = Rp(0) + X q^e(0) + v^0 \one$, $q^c(0) = \zero$
\end{itemize}

Note that, in comparison to previous papers in the voltage control literature, our reactive power injection limits $\qlims$ are slightly more generous than the $\pm 0.2$ MVar used in, e.g., \cite{qu2020optimal}. We choose $\pm 0.24$ MVar because even a controller with perfect knowledge of the future would need reactive power injection capabilities of at least $\pm 0.238$ MVar in order to maintain $v(t) \in [\vmin, \vmax]$ (if $\qmin = -\qmax$).

In order to truly satisfy the requirement in \Cref{as:bounded_vpar} $v(t) \in [\vmin + (\eta+\epsilon), \vmax - (\eta+\epsilon)]$ with $\epsilon=0.1$, the reactive power injection capabilities need to be at least $\pm 0.455$ MVar. As we show in our experiments with only $\pm 0.24$ MVar range of control, \Cref{as:bounded_vpar} does not need to be fully satisfied in order for our method to still provide strong empirical results.

We fix $\epsilon=0.1$ in our experiments, corresponding to a robustness margin $\rho = \epsilon / (2 \|\qmax-\qmin\|_2) \approx 0.014$. We set $\eta = 8.65$, which upper-bounds the true maximum change in exogenous noise observed in our data:
\[
    \max_t \norm{R (p(t) - p(t-1)) + X (q^e(t) - q^e(t-1))}_\infty
    \approx 8.625.
\]
For the robust controller \CTRL, we use a slack variable weight $\beta = 100$ in the cost function, and we set $\Vpar = [\underline\vpar, \overline\vpar]$ to be the rectangle around the true noise:
\[
    \forall i \in \{1,\dotsc,n=55\}:\quad
    \underline\vpar_i = \min_t \vpar_i(t),\quad
    \overline\vpar_i = \max_t \vpar_i(t).
\]

In our implementation of the consistent model chasing algorithm, we make a few changes from the procedure described in \Cref{alg:robust_online_volt_control}. In order to keep the consistent model chasing optimization problem \eqref{eq:cmc} computationally tractable, we do not use the full trajectory $D$ as in constraints \eqref{eq:cmc_constr_consistent}-\eqref{eq:cmc_constr_vpar}. Instead, we include the 20 latest observations and a set of 80 more random samples $(v(t), v(t+1), u(t), q^c(t)) \sim D$. This provides a computationally tractable approximation of the uncertainty set.

Our experiments aim to understand the performance of the proposed online robust controller under different levels of uncertainty and the effect of different initializations of $\hat{X}$. In particular, we consider uncertainty sets of the form
\begin{equation}
    \X_\alpha
    = \set{
        \hat{X} \in \Sym_+^n \cap \R_+^{n \times n}:\
        \|\hat{X} - X^\star\|_\triangle \leq \alpha \norm{X^\star}_\triangle
    }
\end{equation}
where larger $\alpha$ corresponds to a larger uncertainty set. Note that $\diam(\X_\alpha) = 2 \alpha \norm{X^\star}_\triangle$.

We initialize $\hat{X}$ by adding noise to the true $X^\star$ in two ways. First, we multiply the line impedance coefficients $x_{ij}$ by a random scaling factor sampled from $\textsf{Uniform}[1-\sigma, 1+\sigma]$ for some $\sigma \in [0,1]$. Second, we randomly permute the bus ordering, so that $\hat{X}$ corresponds to a permuted grid topology. Finally, we project $\hat{X}$ into the uncertainty set $\X_\alpha$.

We consider 3 settings:
\begin{enumerate}
    \item moderate uncertainty and error ($\alpha = 0.5$), with $\sigma = 0.5$
    \item large uncertainty and error ($\alpha = 1.0$), with $\sigma = 1.0$
    \item large uncertainty ($\alpha = 1.0$) with $\sigma = 1.0$, but with moderate error (the initial $\hat{X}$ is projected into $\X_{0.5}$)
\end{enumerate}

By comparing case (3) to cases (1) and (2), we distinguish the impact of the uncertainty set size and the error of the initial guess for $\hat{X}$, each of which presents different challenges for \Cref{alg:robust_online_volt_control}.

We compare our experiments against four baselines, illustrated in \Cref{fig:baseline}: (a) the case of no controller, (b) the case where our robust controller is used with perfect knowledge of the network topology, (c) a model-agnostic decentralized controller from \cite[Section IV]{li2014realtime}, and (d) another model-agnostic decentralized controller from \cite[Section V]{li2014realtime}. The figure highlights that, without a controller, buses 19 and 31 violate the upper and lower voltage limits, respectively, by a significant margin. In contrast, the robust controller given the true $X^\star$ keeps the voltage within the limits for all buses, as expected. The decentralized model-agnostic controllers (c) and (d), which are supposed to be robust to the underlying topology, do not perform well, as their theoretical convergence guarantees \cite{li2014realtime} only hold for fixed $\vpar$ and for unbounded reactive power injection limits $(\qmin, \qmax)$.

\begin{figure*}[tbp]
 \centering
 \begin{subfigure}[b]{0.23\textwidth}
     \centering
     \includegraphics[width=\textwidth]{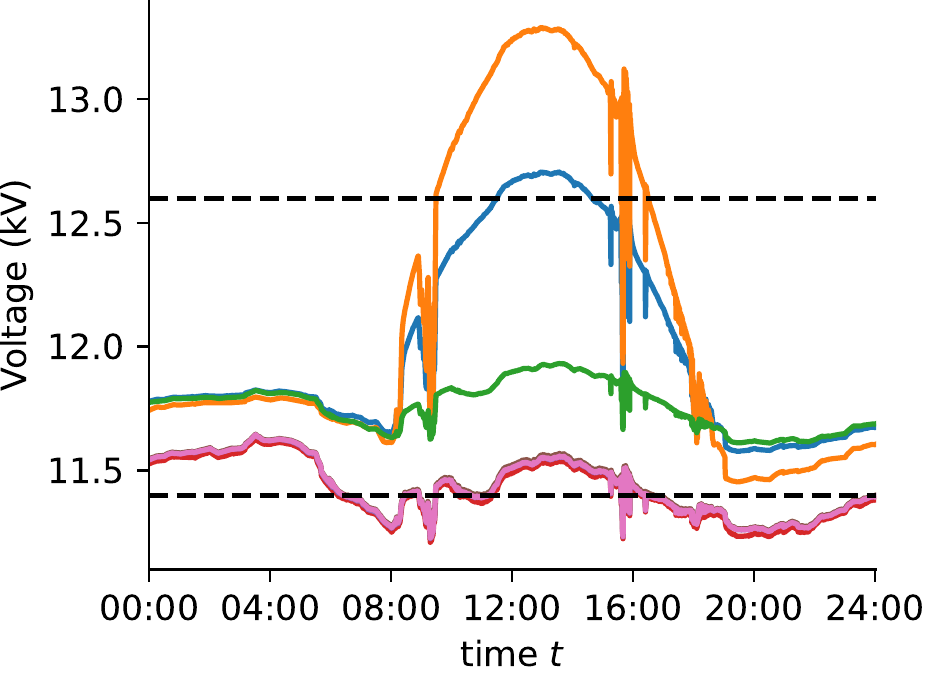}
     \caption{}
     \label{fig:baseline_noctrl}
 \end{subfigure}
 \hfill
 \begin{subfigure}[b]{0.23\textwidth}
     \centering
     \includegraphics[width=\textwidth]{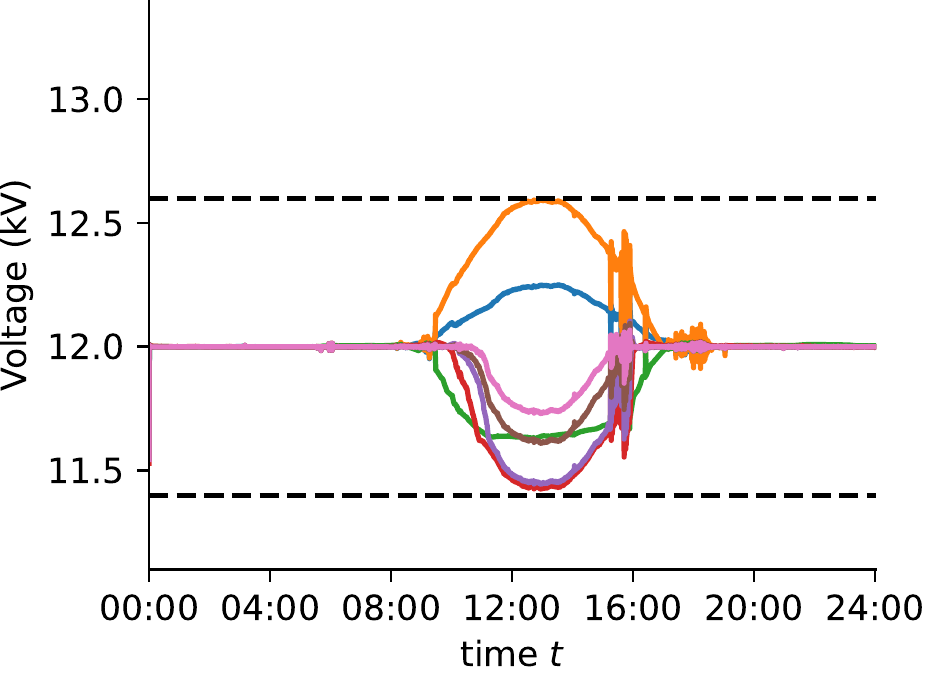}
     \caption{}
     \label{fig:baseline_perfect}
 \end{subfigure}
 \hfill
 \begin{subfigure}[b]{0.23\textwidth}
     \centering
     \includegraphics[width=\textwidth]{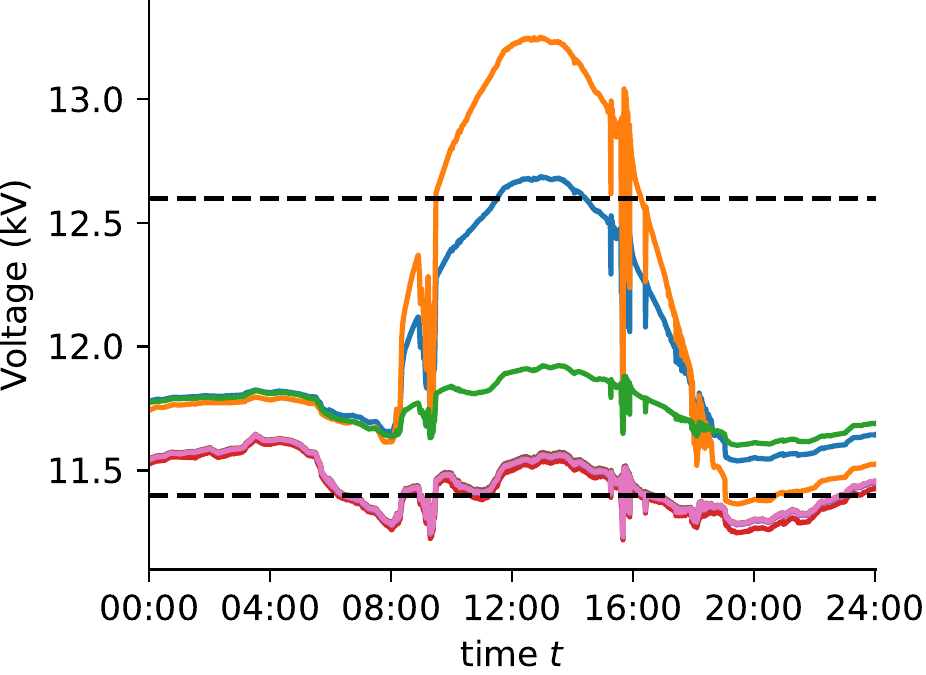}
     \caption{}
     \label{fig:baseline_li_feas}
 \end{subfigure}
 \hfill
  \begin{subfigure}[b]{0.23\textwidth}
     \centering
     \includegraphics[width=\textwidth]{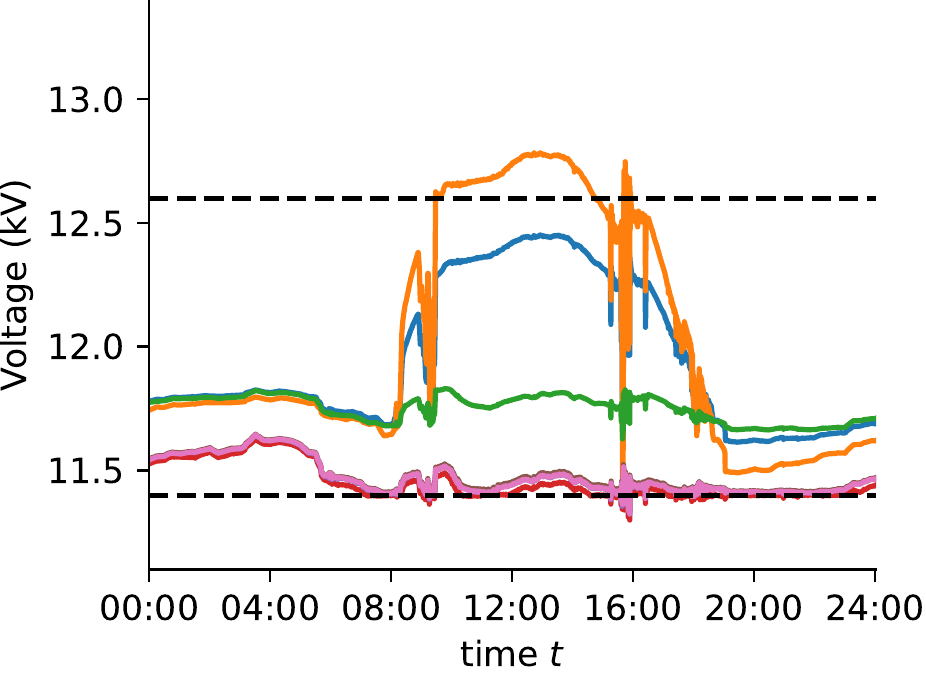}
     \caption{}
     \label{fig:baseline_li_opt}
 \end{subfigure}
  \hfill
  \begin{subfigure}[b]{0.06\textwidth}
     \centering
     \includegraphics[width=\textwidth]{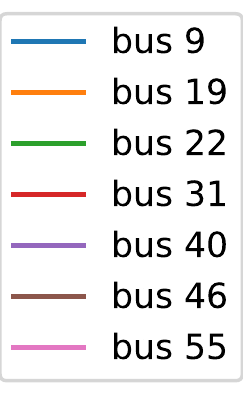}
     \vspace{3em}
 \end{subfigure}
    \caption{Voltage profile of SCE 56 bus distribution system with PV generators. (a) without control (b) robust controller \CTRL{} given the true $X^\star$ (c) model-agnostic controller from \cite[Section IV]{li2014realtime} (d) model-agnostic controller from \cite[Section V]{li2014realtime}.}
    \label{fig:baseline}
\end{figure*}

\subsection{Experimental Results}

Our experimental results focus on demonstrating the ability of \Cref{alg:robust_online_volt_control} to stabilize the system without knowledge of the network topology.  To highlight the performance of the algorithm, we consider settings with moderate and large amounts of uncertainty in \Cref{fig:med_uncertainty,fig:large_uncertainty}.  Importantly, \Cref{alg:robust_online_volt_control} stabilizes the system without performing system identification.  In fact, our results highlight that the algorithm still has significant uncertainty about the topology at the end of the experiments, while also providing nearly the same stabilization performance as the robust controller does with complete information about the topology.

Our simulations use the linearized system dynamics \Cref{eq:vol_dyn1}, and the convex optimization problems for \SEL{} and \CTRL{} are solved with CVXPY \cite{diamond2016cvxpy} using the MOSEK solver. Code for our simulations will be made available on GitHub.

\begin{figure*}[tbp]
 \centering
 \begin{subfigure}[b]{0.33\textwidth}
     \centering
     \includegraphics[width=\textwidth]{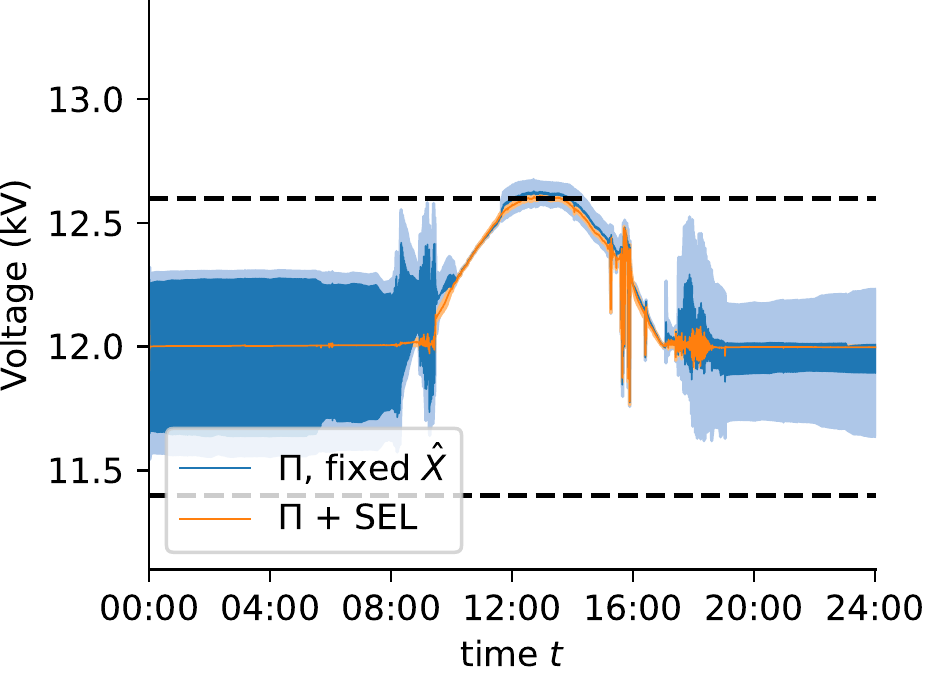}
     \caption{}
     \label{fig:med_uncertainty_bus19}
 \end{subfigure}
 \hfill
 \begin{subfigure}[b]{0.33\textwidth}
     \centering
     \includegraphics[width=\textwidth]{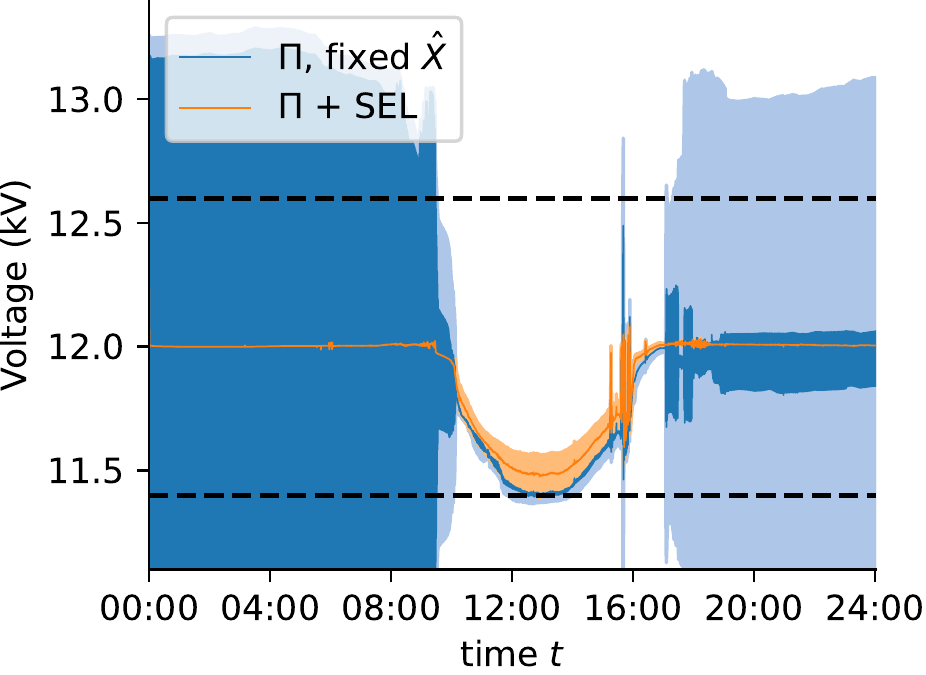}
     \caption{}
     \label{fig:med_uncertainty_bus31}
 \end{subfigure}
 \hfill
 \begin{subfigure}[b]{0.33\textwidth}
     \centering
     \includegraphics[width=\textwidth]{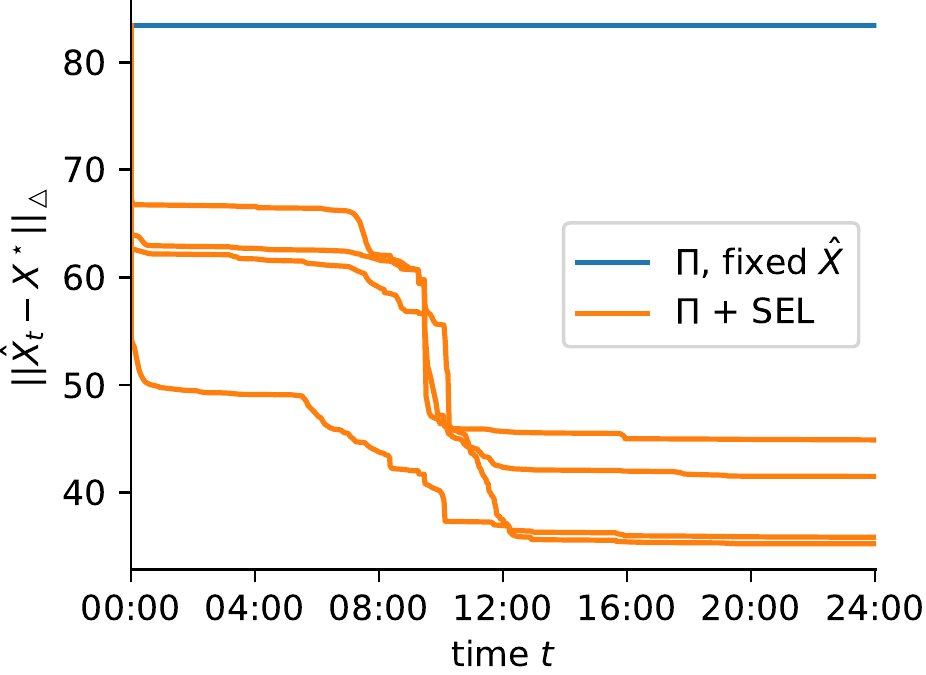}
     \caption{}
     \label{fig:med_uncertainty_cmc}
 \end{subfigure}
    \caption{Results for \Cref{alg:robust_online_volt_control} on the SCE 56-bus distribution system with moderate-sized uncertainty set ($\alpha = 0.5$) and moderate initial error ($\|\hat{X}-X^\star\|_\triangle = 0.5 \norm{X^\star}_\triangle$). (a) Voltage profile of bus 19, where dark blue and dark orange lines plot the mean voltage for the robust controller \CTRL{} with fixed $\hat{X}$ vs. \CTRL{} paired with \SEL, respectively, across 4 random initializations of $\hat{X}$; shaded light blue and light orange regions indicate $\pm 1$ standard deviation. The dark blue line looks like a blob because \CTRL{} over-corrects its mistakes when given incorrect network parameters. (b) Same as (a), but for bus 31. (c) Model uncertainty decreases as \SEL{} learns over time, for the same 4 choices of $\hat{X}$ from (a) and (b). The algorithm keeps the voltage within limits even though the model estimates $\hat{X}_t$ are imperfect, demonstrating that complete system identification is not necessary.}
    \label{fig:med_uncertainty}
\end{figure*}

\begin{figure*}[tbp]
 \centering
 \begin{subfigure}[b]{0.33\textwidth}
     \centering
     \includegraphics[width=\textwidth]{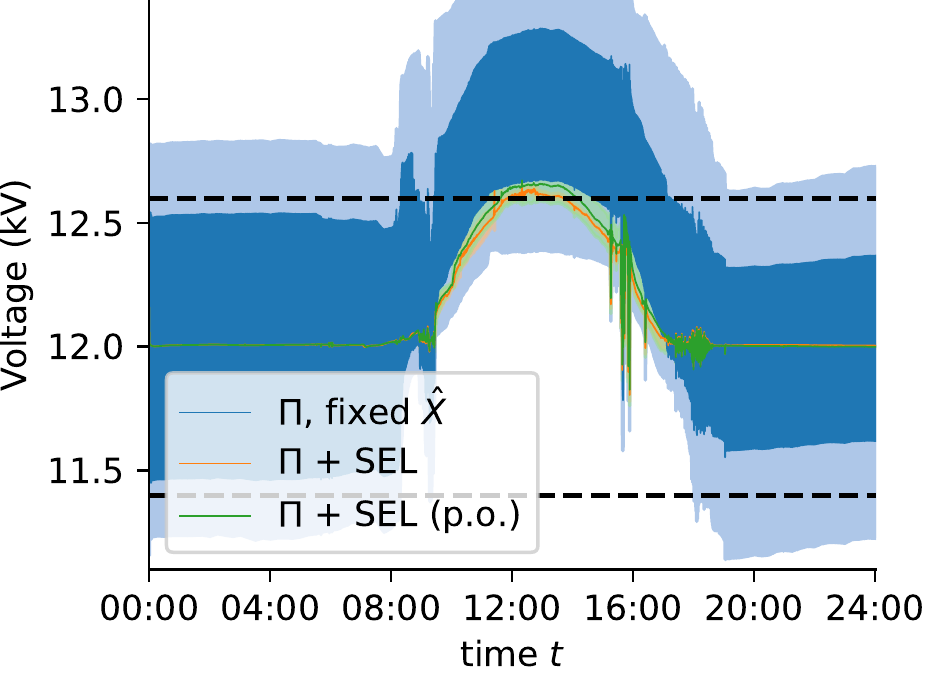}
     \caption{}
     \label{fig:large_uncertainty_bus19}
 \end{subfigure}
 \hfill
 \begin{subfigure}[b]{0.33\textwidth}
     \centering
     \includegraphics[width=\textwidth]{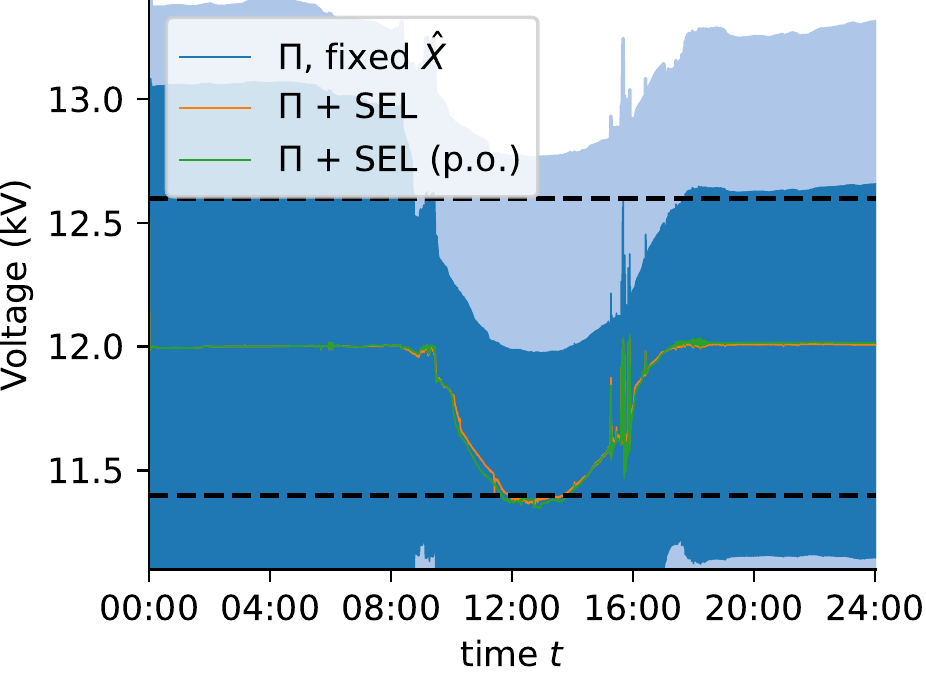}
     \caption{}
     \label{fig:large_uncertainty_bus31}
 \end{subfigure}
 \hfill
 \begin{subfigure}[b]{0.33\textwidth}
     \centering
     \includegraphics[width=\textwidth]{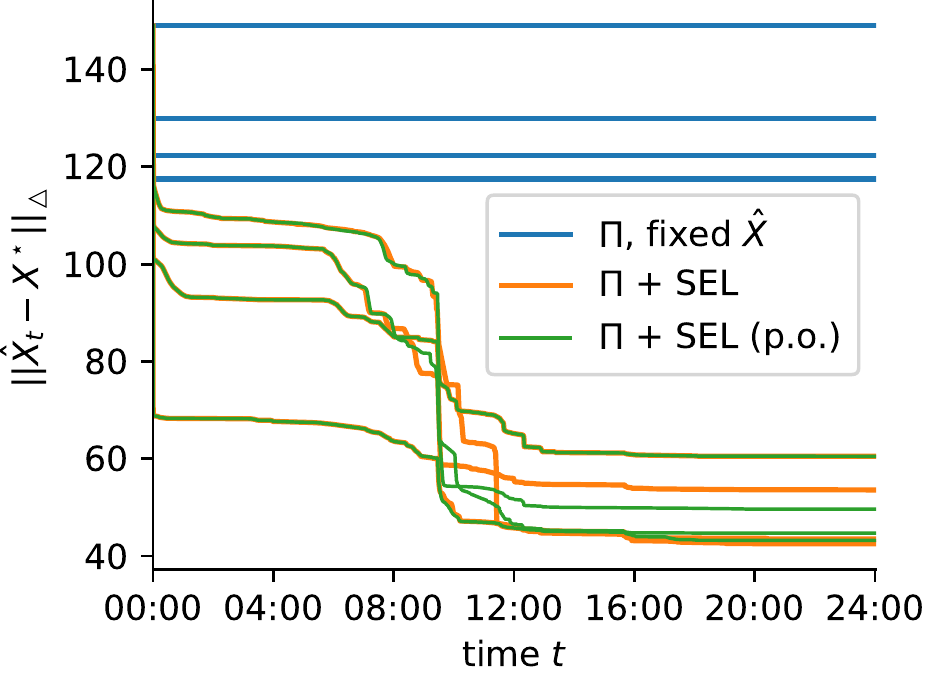}
     \caption{}
     \label{fig:large_uncertainty_cmc}
 \end{subfigure}
    \caption{Same as \Cref{fig:med_uncertainty}, except the uncertainty set and initial error are both larger ($\alpha = 1.0$, $\|\hat{X}-X^\star\|_\triangle = \norm{X^\star}_\triangle$.) The robust controller \CTRL{} with fixed $\hat{X}$ performs significantly worse than in \Cref{fig:med_uncertainty}, while \CTRL{} paired with \SEL{} performs only slightly worse. We also experiment with partial observability of voltages (denoted ``p.o.'') by withholding observations of voltages from certain buses, including buses 19 and 31 whose voltage profiles are plotted in (a) and (b). As expected, our controller performs worse with partially observations compared to full observations, but only marginally so. In (c), it is visible that the lack of voltage observation at these 7 nodes does not significantly affect the ability of \SEL{} to learn consistent models.}
    \label{fig:large_uncertainty}
\end{figure*}

\paragraph{Experiment 1: Moderate Uncertainty.} Our first set of experiments runs the robust online voltage controller (\Cref{alg:robust_online_volt_control}, denoted \CTRL$+$\SEL) in a scenario where the norm of the initial parameter estimate is within 50\% of the true parameter norm. \Cref{fig:med_uncertainty} shows that the proposed controller can consistently maintain a voltage profile within the nominal operation range. The solid orange line and shaded orange region represent the mean and $\pm 1$ standard deviation of voltages from the proposed controller, over 4 random choices of $\hat{X}_1$. In contrast, the robust controller \CTRL{} on its own (without the consistent model chasing algorithm \SEL{} to update the model) exhibits large voltage oscillations shown in blue and fails to stabilize the voltage. This is a consequence of \CTRL{} choosing actions based on incorrect information about the network topology. Using the wrong knowledge is even worse than not having a controller (\Cref{fig:baseline_noctrl}) in this case. This demonstrates a clear need for learning the topology rather than purely replying on a robust voltage controller in the case of uncertainty about the network topology. 

\Cref{fig:med_uncertainty_cmc} shows the evolution of the convex model chasing algorithm results across different choices of $\hat{X}_1$. Notably, even though we only approximate the true consistent set through a small random sample, the distance between the learned model $\hat{X}_t$ and true model $X^\star$ decreases monotonically over time. However, the uncertainty does not converge to zero, illustrating that \Cref{alg:robust_online_volt_control} does not perform complete system identification and instead learns just enough about the topology in order to stabilize the system. This is a key novelty of our approach, and enables the algorithm to quickly adapt to uncertain network conditions.

Finally, we compare the performance of \Cref{alg:robust_online_volt_control} with that of the robust controller that has perfect knowledge of the network topology (\Cref{fig:baseline_perfect}). The performance of \Cref{alg:robust_online_volt_control} is comparable despite its lack of knowledge of the topology. One reason for the near-optimal performance of \Cref{alg:robust_online_volt_control} in this case is that the robust controller \CTRL{} turns out to be even more robust than our theoretical results suggest. In particular, empirically it is robust to $\hat{X}$ up to a distance $0.4 \norm{X^\star}_\triangle \approx 66$ away from the true $X^\star$, even though our theoretical results only guarantee its robustness for distances up to $\rho \approx 0.014$.

\paragraph{Experiment 2: Large Uncertainty}
Next, we test the proposed robust online voltage controller in a more challenging setting where there is a large amount of uncertainty. Here, the initial $\hat{X}$ is generated from impedance values with up to $100\%$ error from the true values. Consistent with the moderate uncertainty case, our method manages to maintain voltage stability across all buses. Even though the initial model estimation can be very different compared to the ground truth system (\textit{i.e.}, $\norm{\hat{X} - X^{\star}}_\triangle > 100$), \Cref{alg:robust_online_volt_control} quickly learns from the mistakes and refines the model estimation. Notice in \Cref{fig:large_uncertainty_cmc} how the model error drops quickly during the voltage violation periods between time 10:00 and 12:00. However, the robust controller on its own (without \SEL) fails to control the voltage in the large uncertainty scenario: the voltage profiles at buses 19 (\Cref{fig:large_uncertainty_bus19}) and 31 (\Cref{fig:large_uncertainty_bus31}) deviate significantly from the nominal value. This may lead to unsafe operating conditions that violate regulatory requirements, with potentially catastrophic consequences, \textit{e.g.}, blackouts. In addition, the robust controller on its own is quite sensitive to the initial choice of $\hat{X}$, with a much larger standard deviation than \Cref{alg:robust_online_volt_control}. Similar to the previous experiment, the model error shown in \Cref{fig:large_uncertainty_cmc} decreases monotonically with a nonzero final estimation error.

In this large uncertainty setting, we also experiment with partial observability of voltages (denoted ``p.o.'' in \Cref{fig:large_uncertainty}) by withholding observations of voltages from buses $i \in \{9, 19, 22, 31, 40, 46, 55\}$. That is, we still permit the robust controller \CTRL{} to control the reactive power injection at these buses, but \SEL{} does not use the voltages $v_i(t)$ from these buses as part of the constraints for the consistent sets. As expected, our controller performs worse in this partially-observed setting, but only marginally so. As shown in \Cref{fig:large_uncertainty_cmc}, the lack of voltage observation at these 7 nodes does not significantly affect the ability of \SEL{} to learn consistent models.

\paragraph{Experiment 3: Moderate Initial Error, Large Uncertainty}
Finally, we test our robust online voltage controller with a large uncertainty set ($\alpha = 1.0$) but moderate initial error ($\hat{X}_1$ projected into $\X_{0.5}$). We found the plots of voltage profiles to be nearly indistinguishable from the moderate uncertainty/moderate error setting (\Cref{fig:med_uncertainty}), so we have omitted inclusion of those plots. This empirical observation matches the intuitive idea that for large uncertainty settings, the observed trajectory data is more informative in the definition of consistent sets than the initial uncertainty set $\X$. This explains why our method performs similarly in both the medium uncertainty/medium error and large uncertainty/large error settings.

\section{Conclusion}
\label{sec:conclusion}

This paper provides the first controller that establishes a finite-mistake guarantee for voltage control in a setting with an unknown grid topology. We showed that traditional voltage control algorithms that rely on knowing network information may cause grid instability when given incorrect information about the network topology; furthermore, decentralized network-agnostic control algorithms may also fail when subject to realistic noise and constraints on control inputs. In contrast, our proposed algorithm is able to learn the grid topology in an online fashion and provably converge to a stable controller. Further, simulated experiments on a 56-bus distribution grid demonstrate the effectiveness of our algorithm in a practical scenario.

As the current algorithm is highly centralized, future works may consider more decentralized approaches to topology-robust voltage control in order to enable faster real-time control.  Ideas from works such as ~\cite{yu2022online} can potentially be adapted.
Further studies may also explore loosening the radial topology assumption to accommodate a wider range of distribution grids. This would be a challenging, but important, extension.  Finally, an interesting algorithmic extension is to consider computationally efficient convex body chasing algorithms with better competitive ratios. Existing methods based on Steiner point \cite{argue2019nearly,bubeck2020chasing} achieve nearly-optimal competitive ratio but are computationally inefficient in high dimension settings such as voltage control, so designing efficient approximate Steiner point algorithms could potentially lead to significant performance improvements.

\begin{acks}
We thank Dimitar Ho for useful conversations as well as the anonymous reviewers for their careful reading of this paper and insightful suggestions. This work was supported by NSF grants CNS-2146814, CPS-2136197, CNS-2106403, NGSDI-2105648, and ECCS-2200692.
\end{acks}

\bibliographystyle{ACM-Reference-Format}
\bibliography{sources}

\end{document}